\title{Customizable Hub Labeling: Properties and Algorithms}
\author{Johannes Blum}{University of Konstanz, Germany \and \url{https://algo.uni-konstanz.de/team/blum}}{blum@inf.uni-konstanz.de}{https://orcid.org/0000-0003-1102-3649}{}
\author{Sabine Storandt}{University of Konstanz, Germany \and \url{https://algo.uni-konstanz.de/team/storandt}}{storandt@inf.uni-konstanz.de}{}{}
\authorrunning{J. Blum and S. Storandt} %
\keywords{Hub Labeling, Customization, Balanced Separator} %
\DeclareMathOperator{\dist}{dist}
\newcommand{\bigO}{\mathcal{O}}
\newcommand{\card}[1]{\lvert #1 \rvert}
\newcommand{\searchspace}{\ensuremath{\mathit{SS}}}
\newcommand{\DSS}{\ensuremath{\mathit{DSS}}}
\newcommand{\Savg}{\ensuremath{S_{\mathrm{avg}}}}
\newcommand{\Lmax}{\ensuremath{L_{\max}}}
\newcommand{\Lavg}{\ensuremath{L_{\mathrm{avg}}}}
\newcommand{\Nup}{\ensuremath{N^{\uparrow}}}
\newcommand{\Ndown}{\ensuremath{N^{\downarrow}}}
\newcommand{\inv}{\ensuremath{_{\mathrm{inv}}}}
\newcommand{\nd}{\ensuremath{\mathrm{nd}}}
\newcommand{\inlineItem}[1]{\textcolor{lipicsGray}{\textbf{\textsf{#1}}}}
\begin{document}

\maketitle

\begin{abstract}
Hub Labeling (HL) is one of the state-of-the-art preprocessing-based techniques for route planning in road networks. It is a special incarnation of distance labeling, and it is well-studied in both theory and practice.  The core concept of HL is to associate a label with each vertex, which consists of a subset of all vertices and respective shortest path information, such that the shortest path distance between any two vertices can be derived from considering the intersection of their labels.  HL provides excellent query times but requires a time-consuming preprocessing phase. Therefore, in case of edge cost changes, rerunning the whole preprocessing is not viable.  Inspired by the concept of Customizable Route Planning, we hence propose in this paper a Customizable Hub Labeling variant for which the edge costs in the network do not need to be known at construction time. These labels can then be used with any edge costs after conducting a so called customization phase.  We study the theoretical properties of Customizable Hub Labelings, provide  an $\mathcal{O}(\log^2 n)$-approximation algorithm for the average label size, and propose efficient customization algorithms.
\end{abstract}

\section{Introduction}
Hub Labeling (HL) is one of the fastest algorithms to compute shortest paths in road networks. 
Given some weighted graph $G=(V,E)$, the idea is to assign a label $L(v) \subseteq V$ to every vertex $v$, such that for any two vertices $s$ and $t$ the \emph{cover property} is fulfilled, which states that the intersection $L(s) \cap L(t)$ of the respective labels contains a vertex $x$ on a shortest $s$-$t$-path.
Moreover for every \emph{hub} $x \in L(v)$ of $v$, the distance $\dist(v,x)$ from $v$ to $x$ is precomputed.
Exploiting the cover property, the distance between $s$ and $t$ can then simply be computed by identifying the vertex $x \in L(s) \cap L(t)$ that minimizes $\dist(s,x) + \dist(x,t)$.
If the labels are sorted, this can be done in time $\mathcal{O}(|L(s)|+|L(t)|)$. This approach allows for query times of less than a microsecond on country-sized road networks \cite{delling2013hub}; a speed-up of six orders of magnitude compared to a run of Dijkstra's algorithm.

However, one limitation of HL is that it is metric-dependent: If the weight of an edge changes, the shortest paths in $G$ may also change, and hence, the cover property might no longer be fulfilled. In this case, new labels have to be computed, which can be very time consuming (in the order of minutes or even hours \cite{Abraham2012}). This poses a serious problem for the applicability of HL; especially when considering travel times in road networks, which undergo frequent changes.
For other preprocessing-based route planning techniques, this problem has been combated by introducing the concept of customizability \cite{delling2011customizable}. Here, the preprocessing phase is divided into a metric-independent phase and the metric-dependent customization phase. The goal of the metric-independent phase is to construct a data structure which works with all possible edge weights. Therefore, if edge weights change, only the (typically much faster) customization phase has to be redone. An important realization of this customization paradigm is the Customizable Contraction Hierarchies (CCH) technique. A theoretical analysis of CCH allowed to come up with upper bounds for the  search space size, that is, the number of vertices that have to be considered in a query \cite{Bauer2016}. Those results also  inspired an efficient implementation for practical applications \cite{Dibbelt2016}. 

The goal of this paper is to apply the customization paradigm to HL for the first time and to study the theoretical properties of metric-independent labelings. In particular, we are interested in bounds and approximation algorithms for the maximum label size $\Lmax=\max_{v \in V}|L(v)|$ and the average label size $\Lavg= \sfrac{1}{|V|} \sum_{v \in V} |L(v)|$.

\subsection{Related Work}
The Hub Labeling (HL) approach was introduced by Cohen et al.~\cite{Cohen2003} under the name \emph{$2$-hop labeling}. They proposed to label every vertex with a set of vertices (also called \emph{hops}), such that for any $s,t \in V$, the shortest $s$-$t$-path contains a common hop of $s$ and $t$.
As minimizing the total label size is NP-hard~\cite{Babenko2015}, they gave an algorithm which computes labels that are in total at most a factor of $\mathcal{O}(\log n)$ larger than the minimum labeling, where $n$ denotes the number of vertices. However, their algorithm does not scale to large networks, although the runtime was reduced from $\mathcal{O}(n^5)$ to $\mathcal{O}(n^3 \log n)$ by Delling et al.~\cite{Delling2014a}.
A special case of HL is \emph{Hierarchical Hub Labeling (HHL)}~\cite{Abraham2012}, where we have a global order on the vertices such that the label $L(v)$ of a vertex $v$ only contains vertices of higher order.
Babenko et al.~\cite{Babenko2015} showed that minimizing the total label size (which is equivalent to minimizing $\Lavg$) for HHL is NP-hard and gave two algorithms that achieve an approximation ratio of $\mathcal{O}(\sqrt n \log n)$. They also proved that the total label size of HHL can be larger by a factor of $\Theta(\sqrt n)$ than for general HL.
In practice however, heuristic implementations of HHL scale much better than (H)HL approximation algorithms and produce labels of comparable size~\cite{Delling2014,Delling2014a}.
In fact, many of the most successful shortest path algorithms such as Pruned Highway Labeling  \cite{Akiba2013}, Hierarchical 2-Hop labeling   \cite{Ouyang2018}, and Projected vertex separator based 2-Hop labeling \cite{Chen2021} are based on HHL.

One very successful heuristic for HHL computation is based on Contraction Hierarchies (CH),  another  route planning technique that also relies on preprocessing~\cite{Geisberger2012}. There, the idea is to construct an overlay graph, based on hierarchical vertex contraction. In the resulting graph, queries can be answered by a bidirectional run of Dijkstra's algorithm, which only relaxes edges towards vertices higher up in the hierarchy. The set of vertices visited by the Dijkstra run from a vertex $v$ is called the search space $\searchspace(v)$. It was shown in \cite{Abraham2012} that a valid HHL can be constructed by setting $L(v)=\searchspace(v)$ for all vertices in the graph. In practical implementations, the labels derived from CH  are  pruned in a post-processing phase, as they may contain a significant portion of vertices not necessary to fulfill the cover property \cite{Abraham2011a,Abraham2012}. 

Cohen et al.\ \cite{Cohen2003} showed how to compute lower bounds on the average label size $\Lavg$ of HL, based on the \emph{efficiency} of vertex pairs. Their result implies that there are graphs where we have $\Lavg \in \Omega(\sqrt m)$, where $m$ is the number of edges in the input graph.
Rupp and Funke \cite{Rupp2021} showed a lower bound of $\Omega(\sqrt n)$ on the average label size $\Lavg$ of HL for a special family of grid graphs, which are planar and have bounded degree. Moreover, they presented an algorithm which constructs instance-based lower bounds for $\Lavg$ and proved that their approach generates tight lower bounds on different graph classes, including ternary trees.

Recently, approaches for dynamic HL were proposed that aim at updating the labels in case of edge cost changes \cite{Akiba2014,d2019fully,farhan2021fast,farhan2021efficient}. However, these are designed to deal with the increase or decrease of a single edge weight, while in practice there are often a multitude of non-local edge weight changes to be considered at once. For such scenarios, the concept of customizability -- where all edge costs may be changed simultaneously -- is more appropriate.

\subsection{Contribution}
In this paper, we introduce and study the concept of Customizable Hub Labeling (CuHL). We also consider the important special case of hierarchical Customizable Hub Labeling (HCuHL) inspired by the practical usefulness of HHL. We reveal several interesting properties of  CuHL and HCuHL, and provide efficient algorithms for their computation and customization:

\begin{itemize}%
\item We show  that  HCuHL are closely related to CCH. Indeed, we certify that the minimum average and maximum search space sizes in any CCH are equal to the minimum average and maximum label sizes in any HCuHL. We prove that the same does not apply to conventional CH and HHL by providing  an example graph in which the average CH search space size is larger than the average HHL label by a factor of $\Theta(\sqrt n)$. 
\item We argue that the average label size of CuHL is lower bounded by $\sfrac{1}{4}\cdot b_{\sfrac{2}{3}}$, where $b_{\sfrac{2}{3}}$ is the size of a smallest $\sfrac{2}{3}$-balanced separator of the input graph. For HCuHL, we show a slightly larger lower bound of  $\sfrac{2}{3} \cdot b_{\sfrac{2}{3}}$.
\item Based on these obtained lower bounds and further structural insights, we show that we can exploit an approximation framework designed for CCH \cite{Blum2020} to derive a $\mathcal{O}(\log^2 n)$-approximation algorithm for the average label sizes of both CuHL and HCuHL. It is notable that HCuHL allows a polylogarithmic approximation factor, as the best approximation factor for HHL known so far is $\mathcal{O}(\sqrt n \log n)$. 
\item In previous work, it was shown that there can be a gap of $\Omega(\sqrt n)$ between the optimal average label sizes of HL and HHL~\cite{Babenko2015}. We prove that for their customizable counterparts, the gap is in $\mathcal{O}(\log n)$ and hence significantly smaller.
\item We design and analyze efficient customization algorithms  for HCuHL and CuHL. While we heavily rely on the CCH customization framework for HCuHL, we develop a novel and independent algorithm for CuHL.
\end{itemize}

We focus on a theoretical analysis of (H)CuHL. Still, we expect (H)CuHL to outperform other approaches such as CCH in practice, as the query time of HCuHL is linear in the label size, while for CCH, it can be be quadratric in the search space size.

\section{Customizable Hub Labels}
We first formally introduce the concept of metric-independent labelings, analyze their basic properties, and then study their relationship to Customizable Contraction Hierarchies (CCH).

\subsection{Definitions and Properties}
Formally, a \emph{labeling} of a graph $G=(V,E)$ is a function $L \colon V \rightarrow 2^V$, which assigns a label $L(v) \subseteq V$ to every vertex. A Hub Labeling (HL) of a graph with positive edge weights is a labeling which fulfills the \emph{cover property}, i.e.\ for any two vertices $s,t \in V$, the intersection of the labels $L(s) \cap L(t)$ contains some vertex on a shortest $s$-$t$-path.
A labeling $L$ is a \emph{hierarchical} Hub Labeling (HHL), if there is some vertex order $\pi \colon V \rightarrow \{1,\dots, n\}$ such that $u \in L(v)$ implies $\pi(u) \geq \pi(v)$. In this case we say that $L$ respects $\pi$. For a vertex $v$, we call $\pi(v)$ also the \emph{rank} of $v$.
Given some vertex order $\pi \colon V \rightarrow \{1, \dots, n\}$, the \emph{canonical} HHL respecting $\pi$ is the labeling $L^*$ that satisfies $u \in L^*(v)$ if and only if $u$ has maximum rank among all vertices on any shortest $v$-$u$-path. It is known that for a fixed order $\pi$, the canonical HHL $L^*$ is the minimum HHL that respects $\pi$, i.e.\ for any HHL $L$ respecting $\pi$ we have $L(v) \supseteq L^*(v)$~\cite{Babenko2015}. This implies that for any labeling $L$ of a weighted graph $G=(V,E)$, for all $s,t \in V$ the intersection $L(s) \cap L(t)$ contains the vertex $w$ of maximum rank on all shortest $s$-$t$-paths.

We now want to extend those concepts such that they do not depend on one particular metric but work with any metric; i.e. the cover property needs to be fulfilled no matter how the edge costs are chosen later on.
\begin{definition}[Customizable Hub Labeling (CuHL)]
A CuHL of a graph $G=(V,E)$ is a labeling $L$ such that the \emph{customizable cover property} is fulfilled; i.e., for any two vertices $s,t \in V$  and any $s$-$t$-path $P$, the set $L(s) \cap L(t)$ contains some vertex on $P$.
\end{definition}
A \emph{hierarchical} Customizable Hub Labeling (HCuHL) is a CuHL respecting some vertex order $\pi$. The \emph{canonical} HCuHL respecting $\pi$ is the labeling $L^*$ such that $u \in L^*(v)$ if and only if there is some $v$-$u$-path on which $u$ has maximum rank.
In \cite{Babenko2015}, it was proven that  for a given vertex order $\pi$, the canonical HHL is the minimum HHL that respects $\pi$. We can show that this result transfers to the customizable setting.
\begin{restatable}{lemma}{canonicalHCuHLminimum}
For any order $\pi$ the canonical HCuHL is the minimum HCuHL that respects $\pi$.
\end{restatable}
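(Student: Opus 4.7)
The plan is to mirror the classical proof for HHL from \cite{Babenko2015}, showing two things: (a) the canonical HCuHL $L^*$ is itself a valid HCuHL respecting $\pi$, and (b) every HCuHL $L$ respecting $\pi$ satisfies $L(v) \supseteq L^*(v)$ for all $v \in V$. Combined, these establish that $L^*$ is the unique pointwise minimum HCuHL respecting $\pi$.

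For part (a), I would first check that $L^*$ respects $\pi$: if $u \in L^*(v)$, then by definition there is a $v$-$u$-path on which $u$ has maximum rank, and in particular $\pi(u) \geq \pi(v)$. Next, I would verify the customizable cover property. Given any $s,t \in V$ and any $s$-$t$-path $P$, let $w$ be the vertex of maximum rank on $P$. Then $P$ decomposes into an $s$-$w$-subpath and a $w$-$t$-subpath, and on each of these subpaths $w$ still has maximum rank (since it is the maximum on all of $P$). By the definition of $L^*$, this gives $w \in L^*(s)$ and $w \in L^*(t)$, so $w \in L^*(s) \cap L^*(t)$ is a vertex of $P$, as required.

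For part (b), which is the heart of the lemma, I take an arbitrary HCuHL $L$ respecting $\pi$, a vertex $v$, and some $u \in L^*(v)$. By definition of $L^*$, there is a $v$-$u$-path $P$ on which $u$ attains the maximum rank. Applying the customizable cover property of $L$ to the pair $(v,u)$ and the path $P$, there exists a vertex $w$ on $P$ with $w \in L(v) \cap L(u)$. Since $L$ respects $\pi$, the condition $w \in L(u)$ forces $\pi(w) \geq \pi(u)$; but since $w$ lies on $P$ and $u$ is of maximum rank on $P$, we also have $\pi(w) \leq \pi(u)$. Hence $\pi(w) = \pi(u)$, so $w = u$ (ranks being a bijection to $\{1,\dots,n\}$), and therefore $u \in L(v)$. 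This shows $L^*(v) \subseteq L(v)$.

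I do not anticipate a serious obstacle here: both parts reduce to routing definitions, and the only subtlety is to make sure the customizable cover property is invoked on the specific path $P$ witnessing $u \in L^*(v)$, rather than on shortest paths as in the metric-dependent case. This use of the freedom to choose any path is precisely where the customizable setting differs from the classical HHL argument and is what makes the equivalence of ranks force $w = u$.
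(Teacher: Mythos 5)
Your proof is correct and follows essentially the same route as the paper's: the cover property of $L^*$ is established by splitting an arbitrary path at its maximum-rank vertex, and minimality follows by applying the customizable cover property of $L$ to the witness path $P$ and noting that the hierarchy constraint forces the covering vertex to be $u$ itself. The only (harmless) additions are your explicit check that $L^*$ respects $\pi$ and the spelled-out rank-sandwich argument $\pi(w)\leq\pi(u)\leq\pi(w)$, which the paper states more tersely.
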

\begin{proof}
To show that the canonical HCuHL $L^*$ fulfills the customizable cover property, consider a simple path $P$ between vertices two $s$ and $t$ and let $w$ be the vertex of maximum rank on $P$. As $w$ is also the vertex of maximum rank on the subpath from $s$ to $w$, it follows that $w \in L^*(s)$, and analogously we have $w \in L^*(t)$, which means that $w \in L^*(s) \cap L^*(t)$.

Consider now an arbitrary HCuHL $L$ that respects $\pi$. We show that $L^*(v) \subseteq L(v)$ for every vertex $v$. Let $w \in L^*(v)$. This means that there is a simple $v$-$w$-path $P$, on which $w$ has maximum rank. As $L(v) \cap L(w)$ needs to contain some vertex on $P$ and $L(w)$ contains only vertices of rank at least $\pi(w)$, it follows that $w \in L(v) \cap L(w)$, and hence $w \in L(v)$.
\end{proof}
If the customizable cover property is fulfilled, then for any metric $\ell \colon E \rightarrow \mathbb{R}$, the traditional cover property is fulfilled on the associated weighted graph.
Still, to be able to answer shortest path queries, the metric $\ell$ needs to be incorporated, i.e., in a customization step the respective shortest path distances need to be assigned to each vertex in a label. In \cref{sec:customization}, we discuss different approaches for label customization in more detail.
Given a customized (H)CuHL, the standard HL query algorithm can be used to compute shortest paths. Hence, after the customization, only \emph{one} common hub is required to correctly answer an $s$-$t$-query.
This means that in the customized CuHL, smaller labels may suffice to ensure correct queries.
However, the following lemma shows that it is not possible to prune the labels in a canonical HCuHL before the customization; i.e., there always exists edge weights such that all labels of the HCuHL are  also needed in the respective HHL  respecting the same vertex order.

\begin{lemma}\label{lem:HCuHL_equal_HHL}
Let $G=(V,E)$ be an unweighted graph. For any vertex order $\pi \colon V \rightarrow \{1,\dots,n\}$ there exist metric edge weights $\ell \colon E \rightarrow \mathbb{N}$ such that the canonical HCuHL is equal to the canonical HHL of the weighted graph $(G,\ell)$.
\end{lemma}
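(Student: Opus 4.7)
The plan is to first observe that for any edge weights $\ell$, the canonical HHL respecting $\pi$ is always contained vertex-wise in the canonical HCuHL respecting $\pi$: if $u$ has maximum rank on some shortest $v$-$u$-path, then that path is in particular a $v$-$u$-path on which $u$ is of maximum rank, so $u$ lies in the canonical HCuHL label of $v$. It therefore suffices to exhibit an assignment $\ell \colon E \to \mathbb{N}$ that forces the reverse inclusion, i.e., that makes every shortest $v$-$u$-path inherit the max-rank-at-$u$ property whenever some $v$-$u$-path already has it.

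To achieve this, I would penalize high-ranked intermediate vertices by fixing an integer $M > n$ and setting $\ell(\{a,b\}) = M^{\max(\pi(a),\pi(b))}$ for every edge $\{a,b\} \in E$. With this weighting, the length of a path is essentially controlled by the highest rank it visits. More concretely, consider a pair $v,u$ with $r := \pi(u) > \pi(v)$ (the case $v=u$ being trivial, and $\pi$ being a bijection). Any $v$-$u$-path on which $u$ has maximum rank uses only edges both of whose endpoints have rank at most $r$, so each such edge has weight at most $M^r$ and the total length is at most $n \cdot M^r$. In contrast, any $v$-$u$-path visiting an intermediate vertex of rank $r' > r$ already contains an edge of weight at least $M^{r'} \ge M^{r+1} > n \cdot M^r$ by the choice of $M$.

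Combining these two length estimates, every $v$-$u$-path violating the max-rank-at-$u$ condition is strictly longer than every valid one, so whenever a valid $v$-$u$-path exists, every shortest $v$-$u$-path must have $u$ as its maximum-rank vertex. Applied to every pair $(v,u)$ for which $u$ appears in the canonical HCuHL label of $v$, this places $u$ in the canonical HHL label of $v$ under $\ell$, yielding the missing inclusion and hence equality of the two canonical labelings. The main technical point is designing a weight function whose path cost is genuinely dominated by the highest-ranked vertex visited; the exponential weighting above guarantees a comfortable factor-$M$ gap between paths respecting the max-rank-at-$u$ condition and those that do not, and once this separation is secured the rest of the argument reduces to a one-line comparison.
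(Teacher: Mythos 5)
Your proof is correct and follows essentially the same approach as the paper: exponential edge weights determined by the higher-ranked endpoint, so that any path through a vertex of rank above $\pi(u)$ is strictly longer than a (simple) max-rank-at-$u$ path. The only difference is cosmetic: the paper uses base $3$ and counts that a simple path uses each weight $3^i$ at most twice, whereas you use a cruder per-edge bound compensated by choosing the base $M>n$; both yield the required separation.
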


\begin{proof}
Consider some graph $G=(V,E)$ and fix some vertex order $\pi$. Choose the weight of an edge $\{u,v\}$ as $\ell(\{u,v\}) = 3^{\max\{\pi(u),\pi(v)\}}$.
Let $L$ be the canonical HCuHL respecting $\pi$ and let $L'$ be the canonical HHL of the weighted graph $(G,\ell)$ respecting $\pi$.
Clearly, for any vertex $v$ we have $L'(v) \subseteq L(v)$.
Hence, it suffices to show that $L(v) \subseteq L'(v)$ to prove the lemma. Now, consider some vertex $v$ and let $u \in L(v)$. This means that there exists some $v$-$u$-path $P$ such that $u$ has maximum rank on $P$. Let $r = \pi(u)$ be the rank of $u$.
As $P$ is a simple path, no edge weight along $P$ occurs more than twice and $P$ contains exactly one edge of weight $3^r$. This means that the length of $P$ is at most \[ \ell(P) \leq 3^r + 2 \sum_{i=0}^{r-1} 3^i = 3^r + 2 \cdot \sfrac{1}{2} (3^r - 1) = 3^{r+1} - 1.\]

Suppose that we have $u \not \in L'(v)$. This means that there is a shortest $v$-$u$-path $P'$ which contains a vertex $w$ of rank $\pi(w)>r$. This implies that the path $P'$ has length at least $\ell(P') \geq 3^{r+1} > \ell(P)$, which is not possible if $P'$ is a shortest path.
It follows that $u \in L(v)$ and hence $L(v) \subseteq L'(v)$.
\end{proof}

\subsection{Relationship to Customizable Contraction Hierarchies}
Contraction Hierarchies (CH) \cite{Geisberger2012} are one of the most widely used  approaches for shortest path computation in road networks. Given a graph $G$ with positive edge weights and a vertex order $\pi$, the CH graph $G^+$ is obtained from $G$ by adding shortcut edges as follows. There is a shortcut edge between vertices $v$ and $w$ if and only if $G$ contains a shortest $v$-$w$-path $P$ such that for any vertex $u \in P \setminus \{v,w\}$ we have $\pi(u) < \min\{\pi(v),\pi(w)\}$. The weight of the shortcut edge $\{v,w\}$ is chosen as the weight of $P$.
To compute a shortest path between two vertices $s$ and $t$ of $G$, it suffices to perform a bidirectional run of Dijkstra's algorithm from $s$ and $t$ with the restriction that edges are only relaxed if they point ``upwards'' w.r.t.\ the vertex order $\pi$.

In a Customizable Contraction Hierarchy (CCH) \cite{Dibbelt2016}, a shortcut edge $\{v,w\}$ is added if and only if there exists a simple $v$-$w$-path, which (except for $v$ and $w$) only contains vertices $u$ satisfying $\pi(u) < \min\{\pi(v),\pi(w)\}$. Moreover, in a customization step, we can propagate edge weights of the original graph $G$ to the shortcuts. If the weight of an edge of the original graph $G$ changes, only the customization phase has to be repeated (which can be done quite efficiently), but not the construction of the CCH graph $G^*$.

We now show that CCH are indeed closely related to CuHL. In particular, we prove that for any CCH graph $G^*$ there is a hierarchical CuHL $L^*$ such that the search spaces of $G$ coincide with the labels of $L^*$ and vice versa. Abraham et al.\ \cite{Abraham2011a} observed that we can construct a valid HL by choosing CH search spaces as the individual labels.
This approach is also valid in the customizable setting:
If $G^*$ is a CCH graph of some graph $G$, we may choose the label $L^*(v) = \searchspace(v)$ for every vertex $v$, where $\searchspace(v)$ is the search space of $v$ in $G^*$, i.e.\ the set of all vertices which can be reached from $v$ on some path which is increasing w.r.t.\ the vertex order used $\pi$ during the construction of $G^*$.
In fact, this approach yields the canonical HCuHL respecting $\pi$.

\begin{lemma}\label{lem:HCuHL_equal_CCH}
Let $\pi \colon V \rightarrow \{1, \dots, n\}$ be a vertex order of a graph $G=(V,E)$ and let $G^*$ be the resulting CCH graph.
If $L^*$ is the canonical HCuHL respecting $\pi$, then for every vertex $v \in V$ we have $L^*(v) = \searchspace(v)$, where $\searchspace(v)$ denotes the search space of $v$ in $G^*$.
\end{lemma}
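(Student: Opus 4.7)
The plan is to prove the two inclusions $\searchspace(v) \subseteq L^*(v)$ and $L^*(v) \subseteq \searchspace(v)$ separately. Both directions hinge on the CCH construction: every shortcut edge in $G^*$ unfolds to a simple path in $G$ whose intermediate vertices are strictly lower in rank than its two endpoints.

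For the inclusion $\searchspace(v) \subseteq L^*(v)$, I would take any $u \in \searchspace(v)$ and fix a $\pi$-ascending path $v = v_0, v_1, \ldots, v_k = u$ in $G^*$. Replacing each shortcut edge $\{v_i, v_{i+1}\}$ by its witnessing $G$-path yields a walk in $G$ from $v$ to $u$. The inserted intermediate vertices have rank strictly below $\min\{\pi(v_i), \pi(v_{i+1})\} \le \pi(u)$, and each $v_i$ itself has rank at most $\pi(u)$ because the original path was $\pi$-ascending; extracting a simple subwalk preserves these rank bounds. The resulting simple $v$-$u$-path in $G$ witnesses $u \in L^*(v)$.

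For the reverse inclusion $L^*(v) \subseteq \searchspace(v)$, I would induct on the length of a simple $v$-$u$-path in $G^*$ on which $u$ has maximum rank, starting from the path in $G \subseteq G^*$ guaranteed by $u \in L^*(v)$. The base case is immediate. In the inductive step, let $P = v_0, \ldots, v_k = u$ and let $v_j$ be a vertex of globally minimum rank on $P$; since $u$ has maximum rank, $j < k$. If $j = 0$, then $\{v_0, v_1\}$ is ascending and the inductive hypothesis applied to the subpath $v_1, \ldots, v_k$ yields an ascending $v_1$-$u$-path to which I prepend the first edge. If $0 < j < k$, then $\pi(v_j) < \min\{\pi(v_{j-1}), \pi(v_{j+1})\}$; unfolding the (possibly shortcut) edges $\{v_{j-1}, v_j\}$ and $\{v_j, v_{j+1}\}$ into their $G$-witnesses and pruning produces a simple $v_{j-1}$-$v_{j+1}$-path in $G$ whose intermediate vertices all have rank strictly below $\min\{\pi(v_{j-1}), \pi(v_{j+1})\}$, so by the CCH shortcut definition the bypass edge $\{v_{j-1}, v_{j+1}\}$ lies in $G^*$. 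Substituting this edge for the segment $v_{j-1}, v_j, v_{j+1}$ gives a strictly shorter simple $v$-$u$-path in $G^*$ on which $u$ still has maximum rank, closing the induction.

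The main obstacle I anticipate is the bookkeeping in the reverse direction: when $\{v_{j-1}, v_j\}$ and $\{v_j, v_{j+1}\}$ are themselves shortcut edges, one must verify that unfolding and then pruning the concatenated witness walks yields a \emph{simple} $G$-path whose intermediate vertices respect the strict rank bound required by the CCH shortcut definition. Once this closure-under-bypass property of $G^*$ is established, the equivalence of $L^*(v)$ and $\searchspace(v)$ follows from the two inductions above.
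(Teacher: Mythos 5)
Your proposal is correct and follows essentially the same approach as the paper: both directions rest on the correspondence between $\pi$-ascending paths in $G^*$ and paths in $G$ on which the endpoint has maximum rank, established by unfolding shortcuts into their witness paths and by bypassing low-rank interior vertices via the CCH closure property. The only difference is organizational (you induct by removing the minimum-rank vertex, while the paper iteratively shortcuts the last descending pair), and your explicit verification that concatenated, pruned witness walks yield a simple $G$-path with the required rank bound is a detail the paper's proof leaves implicit.
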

\begin{proof}
Let $u \in L^*(v)$, which means that $G$ contains some $v$-$u$-path $P$ such that $u$ has maximum rank on $P$.
We have to show that $G^*$ contains some $v$-$u$-path $P'$ that is increasing w.r.t.\ $\pi$.
Assume that $P$ is not increasing w.r.t.\ $\pi$. Let $x$ and $y$ be the last two consecutive vertices on $P$ satisfying $\pi(y) < \pi(x)$. Let $z$ be the vertex that follows $y$ on $P$. It holds that $\pi(y) < \pi(z)$ and hence, the CCH graph contains the edge $\{x,z\}$. This means that in $P$ we can replace the subpath $x,y,z$ by $x,z$. Note that $x,z$ might also be decreasing w.r.t.\ $\pi$. However, in this case we can just repeat this ``shortcutting'' and as the last vertex on $P$ has maximum rank, we eventually obtain $x,z$ such that $\pi(x) < \pi(z)$. By applying the argument iteratively for all edges $\{x,y\}$ of $P$ that are decreasing, we obtain some path $P'$ in $G^*$ that is increasing w.r.t.\ $\pi$.

Suppose now that $u \in \searchspace(v)$. This means that $G^*$ contains some $v$-$u$-path that is increasing w.r.t.\ $\pi$. By construction of $G^*$, it follows that $G$ contains some $v$-$u$-path such that $u$ has maximum rank. Hence we have $u \in L^*(v)$, which shows that $L^*(v) = \searchspace(v)$.
\end{proof}

It follows that for any CCH-graph $G^*$, there is a HCuHL $L^*$, such that the search space sizes of $G^*$ and the label sizes of $L^*$ coincide.
Moreover, for every HCuHL $L^*$ there is some vertex order $\pi$ respected by $L^*$, which can be used to construct a CCH-graph $G^*$, such that for any vertex $v$ we have $\searchspace(v) \subseteq L^*(v)$.
This means that the minimum average and maximum search space sizes in any CCH are equal to the minimum average and maximum label sizes in any HCuHL, respectively.

\begin{corollary}\label{cor:hcuhl-cch}
In any graph $G$, the optimal average label size of HCuHL is equal to the optimal average search space size of CCH, and the optimal maximum label size of HCuHL is equal to the optimal maximum search space size of CCH.
\end{corollary}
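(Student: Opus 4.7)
The plan is to establish the corollary by a short two-way argument that only uses the two previous lemmas. Since both HCuHL and CCH are parameterized by a vertex order, it suffices to match the two optima by choosing, in each direction, the optimizing order from one side and transferring it to the other.

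For the direction that optimal HCuHL is at most optimal CCH, I would pick a vertex order $\pi$ attaining the minimum average (resp.\ maximum) search space size in any CCH, and consider the canonical HCuHL $L^*$ respecting $\pi$. By \cref{lem:HCuHL_equal_CCH}, $|L^*(v)| = |\searchspace(v)|$ for every $v \in V$, so the average (resp.\ maximum) label size of $L^*$ coincides exactly with the optimal CCH search space size. This is a legal HCuHL, so the optimum over all HCuHL is no worse.

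For the reverse direction, I would pick a vertex order $\pi^*$ respected by some HCuHL attaining the optimal average (resp.\ maximum) label size. By \cref{canonicalHCuHLminimum}, the canonical HCuHL $L^*$ respecting $\pi^*$ is the minimum HCuHL respecting $\pi^*$, so $|L^*(v)| \leq |L(v)|$ vertexwise and $L^*$ also attains the HCuHL optimum. Now build the CCH graph $G^*$ from $\pi^*$; by \cref{lem:HCuHL_equal_CCH}, its search spaces satisfy $|\searchspace(v)| = |L^*(v)|$, so the resulting CCH matches the optimal HCuHL sizes exactly. Combining both inclusions gives equality for both the average and maximum variants.

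There is no real obstacle here, since the heavy lifting is already done by \cref{canonicalHCuHLminimum} and \cref{lem:HCuHL_equal_CCH}. The only point that needs a line of care is that picking an ``optimal'' vertex order in each direction is justified: the set of admissible orders is finite (there are at most $n!$ of them), so a minimizer exists on both sides, and the argument above shows that the two minima have the same value.
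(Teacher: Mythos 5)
Your proposal is correct and matches the paper's own argument: the paper likewise derives the corollary by transferring the optimizing vertex order in each direction, using \cref{lem:HCuHL_equal_CCH} to equate canonical-HCuHL labels with CCH search spaces and the minimality of the canonical HCuHL to handle an arbitrary optimal labeling. The only cosmetic difference is that the paper states the reverse direction as $\searchspace(v) \subseteq L^*(v)$ for an arbitrary HCuHL rather than passing explicitly through the canonical one, which is the same reasoning.
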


Moreover, it follows from \cref{lem:HCuHL_equal_HHL,lem:HCuHL_equal_CCH} that there are graphs, on which traditional HHL and CCH coincide.
\begin{corollary}
For any graph $G=(V,E)$ and any vertex order $\pi \colon V \rightarrow \mathbb{N}$ there are edge weights $\ell$ such that for the canonical HHL $L$ of $(G,\ell)$ respecting $\pi$ there is a CCH such that for all $v \in V$ we have $L(v) = \searchspace(v)$.
\end{corollary}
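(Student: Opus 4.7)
The plan is to obtain the corollary essentially for free by composing the two preceding lemmas, with the canonical HCuHL serving as the bridge between the CCH world and the (weighted) HHL world. Fix the graph $G=(V,E)$ and the vertex order $\pi$. I treat $G$ as an unweighted graph (its structural edges are all that matter for what follows) and apply \cref{lem:HCuHL_equal_HHL} to obtain a weight function $\ell \colon E \rightarrow \mathbb{N}$ — concretely the exponential weighting $\ell(\{u,v\}) = 3^{\max\{\pi(u),\pi(v)\}}$ used in that proof — such that the canonical HHL $L$ of $(G,\ell)$ respecting $\pi$ coincides vertex-by-vertex with the canonical HCuHL $L^*$ of $G$ respecting $\pi$.

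Next, I build the CCH graph $G^*$ of $G$ with respect to the same order $\pi$. By \cref{lem:HCuHL_equal_CCH}, this CCH realises the canonical HCuHL in the sense that $\searchspace(v) = L^*(v)$ for every $v \in V$, where $\searchspace(v)$ denotes the upward search space of $v$ in $G^*$. Combining the two identifications gives $L(v) = L^*(v) = \searchspace(v)$ for every $v \in V$, which is exactly the statement.

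Since both ingredient lemmas have already been established, there is no genuine obstacle; the only thing to be slightly careful about is that the CCH construction in \cref{lem:HCuHL_equal_CCH} depends only on $G$ and $\pi$ and not on the edge weights, so the same $G^*$ works after we equip $G$ with $\ell$. I would therefore keep the proof to a single short paragraph, explicitly invoking both lemmas and noting this independence, rather than redoing any of the case analysis on increasing paths or the $3^r$-weight bound.
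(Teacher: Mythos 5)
Your proposal is correct and matches the paper exactly: the paper derives this corollary with no further argument beyond combining \cref{lem:HCuHL_equal_HHL} and \cref{lem:HCuHL_equal_CCH}, using the canonical HCuHL as the common object, precisely as you do. Your added remark that the CCH construction depends only on $G$ and $\pi$ (not on $\ell$) is a sensible point of care that the paper leaves implicit.
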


Interestingly, the analogue of \cref{lem:HCuHL_equal_CCH} does not hold for traditional HHL and CH. 
As we already mentioned, traditional Contraction Hierarchies can be used to construct a hierarchical Hub Labeling of a given weighted graph by choosing the label of every vertex $v$ as $L(v) = \searchspace(v)$, i.e., the set of vertices that can be reached in the CH graph $G^+$ on a path which is increasing w.r.t.\ the contraction order $\pi$~\cite{Abraham2011a}.
In practice however, the resulting labels can still be pruned:
To guarantee the cover property, it suffices that $L(v)$ contains all vertices $w$ such that $w$ has maximum rank on any \emph{shortest} $v$-$w$-path.
The set of these vertices $w$ is also known as the \emph{direct search space} $\DSS(v)$, and we can observe that it is identical to the label $L(v)$ in the canonical HHL $L$ respecting the contraction order $\pi$.

However, we will now show that there can be a large gap of $\Omega(\sqrt n)$ between the optimal label sizes of an HHL and the optimal CH search space sizes. This is a substantial contrast to the customizable setting. Moreover, it is the very first separation bound for traditional CH and HHL.

\begin{theorem}
There is a graph family for which the minimum average label size of HHL is $\Omega(\sqrt n)$ smaller than the minimum average search space size of CH.
\end{theorem}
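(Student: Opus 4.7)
My plan is to exhibit a weighted graph on $n$ vertices in which some hierarchical hub labeling has constant average label size, while every Contraction Hierarchy is forced, by edge count alone, to have average search space size $\Omega(\sqrt{n})$. Specifically, let $G_k$ be the graph with vertex set $A \cup B \cup \{c\}$, where $A=\{a_1,\dots,a_k\}$, $B=\{b_1,\dots,b_k\}$ and $n = 2k+1$, with edges $a_i b_j$ of weight $3$ for all $i,j$ and edges $cv$ of weight $1$ for every $v \in A \cup B$. The weights are chosen so that between any two distinct non-$c$ vertices $s$ and $t$, the shortest $s$-$t$-path has length $2$ and is the detour $s \to c \to t$: the alternative direct edge, if it exists, has weight $3$, and vertices inside $A$ or inside $B$ have no direct edge at all. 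In particular, $c$ lies on a shortest path between every pair of distinct non-$c$ vertices.

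This structural property immediately gives a valid HHL of very small size. I rank $c$ highest (the remaining ranks can be arbitrary) and set $L(v)=\{v,c\}$ for $v \neq c$ and $L(c)=\{c\}$. This labeling respects $\pi$, and since $c \in L(s)\cap L(t)$ lies on a shortest $s$-$t$-path for every pair $s,t$, the cover property holds. Hence the minimum average HHL label size of $G_k$ is at most $(4k+1)/(2k+1) < 3$.

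For the CH lower bound I use a simple counting argument that applies to every vertex order. Since the CH graph $G^+$ always contains $G_k$ as a subgraph, the search space $\searchspace(v)$ contains $v$ together with every neighbor of $v$ in $G_k$ of higher rank, and each edge of $G_k$ is counted exactly once (by its lower-ranked endpoint) in the sum
\[
\sum_{v \in V}|\searchspace(v)| \;\geq\; n + |E(G_k)| \;=\; (2k+1) + (k^2 + 2k).
\]
Hence the minimum average CH search space size is at least $1+(k^2+2k)/(2k+1)=\Omega(k)=\Omega(\sqrt{n})$. Combining the two estimates yields the claimed separation. The only nontrivial step is confirming the shortest-path structure forced by the weights; once that is in place, the HHL upper bound is immediate, and the CH lower bound follows from the definition of the search space together with a single global edge count.
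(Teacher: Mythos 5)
Your HHL upper bound is fine, but the CH lower bound rests on a fragile point: every edge $a_ib_j$ has weight $3$ while $\dist(a_i,b_j)=2$, so all $k^2$ of these edges are \emph{dominated} --- they lie on no shortest path. The counting argument $\sum_{v}|\searchspace(v)|\geq n+|E(G_k)|$ only works if such dominated original edges are retained in the upward search graph and traversed by the query. Under the standard, correctness-preserving convention that the search graph contains an edge $\{v,w\}$ only when some shortest $v$-$w$-path certifies it (which is exactly the ``if and only if'' condition in the paper's definition of shortcut edges), ranking $c$ highest yields a CH with no shortcuts at all and $\searchspace(v)=\{v,c\}$ for every $v\neq c$, so the \emph{minimum} average search space size of your instance is $O(1)$ and the separation vanishes. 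In other words, your lower bound measures the cost of keeping edges that a trivial preprocessing step (delete every edge longer than the distance between its endpoints) would remove, rather than a property of the CH mechanism; the theorem is only meaningful if the instance is robust against this.

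The paper's construction is built precisely to avoid this: it is sparse ($m=\Theta(n)$, so edge count alone cannot force $\Omega(\sqrt n)$ average search spaces) and every original edge is itself a shortest path. There, $k$ star centers form a clique whose weight-$5$ edges are unique shortest paths, so the lower-ranked half of the centers each see $\Omega(k)$ vertices; and the unique shortest path between two leaves of the same star runs through the center, so whichever leaves are ranked above their center must pairwise acquire genuine shortcuts, contributing quadratically many search-space entries per star. Averaging over the $\Theta(k^2)$ vertices gives $\Savg\in\Omega(k)=\Omega(\sqrt n)$ against $\Lavg\in O(1)$. Your idea can be salvaged along the lines of the paper's own side remark: take a complete graph with all edge costs $2$ plus an apex joined to every vertex at cost $1$; then each direct edge is a (non-unique) shortest path and is indisputably present in $G^{+}$, giving $\Savg\in\Theta(n)$ versus $\Lavg\in O(1)$ --- at the price of abandoning shortest-path uniqueness, which the paper's main construction deliberately preserves.
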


\begin{proof}
Consider the following graph (cf.\ \cref{fig:hhl_ch}). It contains $k$ stars with $k$ leaves, the length of every star edge is $1$. The centers of the stars form a clique of size $k$, the clique edges have length $5$. Moreover, there is one additional vertex $s$, which is connected to every star leaf through an edge of length $2$ and to every star center through an edge of length $3$.

\begin{figure}[t]
\centering
\includegraphics{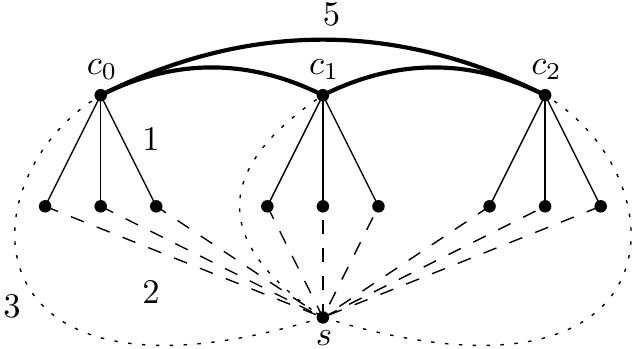}
\caption{Example of a graph family which allows hierarchical labels of average size $\mathcal{O}(1)$, but for which the average search space size of any CH is in $\Omega(\sqrt n)$.}
\label{fig:hhl_ch}
\end{figure}

Consider the following HHL $L$. Denote the star centers by $c_0, \dots, c_{k-1}$. We choose $L(s) = \{s\}$, $L(c_i) = \{s\} \cup \{ c_j \mid j \geq i \}$, and if $v$ is a leaf of a star with center $c_i$, we choose $L(v) = \{s, c_i\}$. It is easy to verify that $L$ satisfies the cover property and that is is hierarchical, as it respects any order $\pi$ satisfying $\pi(s) = n$ and $\pi(c_i) = n-k+i$ .
The label size the $i$-th star center is $|L(c_i)| = k-i+2 \leq k+1$, the label of $s$ has size $2$ and the label of every of the $k^2$ leaves has size $2$.
This means that the average label size of $L$ is $\Lavg = (1 + k^2 \cdot 2 + k \cdot (k+1)) / (1 + k + k^2) \in \mathcal{O}(1)$.

Consider now some CH of the given graph. We show that the average search space size is $\Savg \in \Omega(k) = \Omega(\sqrt n)$.
Let $c_0, \dots, c_{k-1}$ be the star centers w.r.t.\ the contraction order $\pi$, i.e. $\pi(c_0) < \pi(c_1) < \dots < \pi(c_{k-1})$.
Consider some star center $c_i$ where $i \leq \sfrac{k}{2}$. As the star centers form a clique, it follows that the search space of $c_i$ contains all $c_j$ such that $j \geq i$, i.e. $\{c_i, \dots, c_{k-1}\} \subseteq \searchspace(c_i)$. This means that $|\searchspace(c_i)| \geq \sfrac{k}{2}$.

Fix some $i \leq \sfrac{k}{2}-1$ and let $X_i$ be the leaves of the star with center $c_i$.
Let $X_i^<$ and $X_i^>$ be the leaves that are before and after $c_i$ in the contraction order, respectively.
For every $v \in X_i^<$ we have $\searchspace(v) \subseteq \searchspace(c_i)$, and hence $|SS(v)| \geq \sfrac{k}{2}$.
Consider now two leaves $u, v \in X_i^>$. As the shortest $u$-$v$-path is $(u,c_i,v)$ and $c_i$ was contracted before both $u$ and $v$, there must be a shortcut between $u$ and $v$.
This means that either $u \in SS(v)$ or $v \in SS(u)$.
It follows that $\sum_{u \in X_i^>} |SS(u)| \geq {{|X_i^>|}\choose{2}}$.
We obtain that the total search space size of the leaves $X_i$ is
\[
\sum_{u \in X_i} |SS(u)| = \sum_{u \in X_i^<} |SS(u)| + \sum_{u \in X_i^>} |SS(u)| \geq |X_i^<| \cdot \sfrac{k}{2} + {{|X_i^>|}\choose{2}}
\]
As we have $|X_i^<| + |X_i^>| = k$, it follows that $\sum_{u \in X_i} |SS(u)| \in \Omega(k^2)$.
Moreover, there are $\lfloor \sfrac{k}{2} \rfloor$ stars such that $i \leq \sfrac{k}{2}-1$, which implies that $\sum_{v \in V} |SS(u)| \in \Omega(k^3)$.
As the total number of vertices is $(1 + k + k^2) \in \Theta(k^2)$, it follows that the average search space is $\Savg \in \Omega(k^3 / k^2) = \Omega(k) = \Omega(\sqrt n)$.
This concludes the proof.
\end{proof}
Note that the shortest path  between any two vertices in the graph given in Figure \ref{fig:hhl_ch} is unique. Shortest path uniqueness is often assumed to simplify the analysis of algorithms, or to make certain label selection methods viable in the first place (see e.g. \cite{Abraham2012,Delling2014a}). If we allow the existence of multiple shortest paths between two vertices in the input graph, we can increase the gap between CH and HHL even further. In particular, given a complete graph on $n$ vertices $v_1, \dots, v_n$, in which all edge costs are $2$, and one additional node $w$ which is connected to all other vertices through an edge of cost $1$, the average search space size of any CH is in $\Theta(n)$. At the same time, choosing $L(v_i)=\{v_i,w\}$ and $L(w)=\{w\}$ yields an HHL with an average label size of $\mathcal{O}(1)$.

\section{Balanced Separators and Average Label Size}
In this section we investigate the connection between balanced separators and the label sizes of CuHL.
For a graph $G=(V,E)$ and $\alpha \in [0,1]$, an $\alpha$-balanced separator is a set of vertices $S \subseteq V$ such that every connected component of $G[V \setminus S]$ has size at most $\alpha \cdot |V|$. We denote the minimum size of an $\alpha$-balanced separator by $b_{\alpha}$. 

It was previously shown that the average search space size $\Savg$ of any CCH is at least $\Omega(b_{\alpha})$ for $\alpha \geq \sfrac{2}{3}$, and based on this, it was shown that the minimum average search space size can be approximated by a factor of $\mathcal{O}(\log^2 n)$~\cite{Blum2020}.
It follows from \cref{cor:hcuhl-cch} that these results transfer to HCuHL.
For completeness we provide an alternative (slightly simpler) proof for the fact that the label sizes in a HCuHL are in $\Omega(b_\alpha)$. 

Our main focus, though, is to come up with an approximation algorithm for general CuHL. Indeed, we prove that for CuHL we can also bound the average label size in terms of $b_{\alpha}$, although the bound increases by a constant factor compared to the hierarchical case. With the help of this lower bound, we show that we can approximate the average label size of general CuHL within a factor of $\mathcal{O}(\log^2 n)$ as well. 

\subsection{Lower Bounds}
Similar to \cite{Blum2020}, we can show that the average label size of HCuHL is $\Lavg \geq \alpha \cdot b_\alpha$ for $\alpha \in [\sfrac{2}{3},1)$. Moreover, for general CuHL we prove that $\Lavg \geq \frac{2\alpha -1}{2\alpha} \cdot b_\alpha$ if $\alpha \in [\sfrac{1}{2},1)$.
\begin{restatable}{lemma}{lowerboundHCuHL}\label{lem:lower-bound-HCuHL}
For any HCuHL $L$ and any $\alpha \geq \sfrac{2}{3}$, there are $\alpha n$ vertices $v$ such that $|L(v)| \geq b_{\alpha}$.
\end{restatable}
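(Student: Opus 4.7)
The plan is to first reduce to the canonical case: for any HCuHL $L$ respecting an order $\pi$, the canonical HCuHL $L^*$ respecting $\pi$ satisfies $L(v) \supseteq L^*(v)$ pointwise (the customizable analogue of the minimality statement from \cite{Babenko2015}, established in the lemma preceding \cref{lem:HCuHL_equal_CCH}). Hence $|L(v)| \geq |L^*(v)|$, and it suffices to prove the bound for $L^*$.

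Fix such a $\pi$, and let $T$ denote the $b_\alpha - 1$ vertices of highest rank. Since $|T| < b_\alpha$, the set $T$ cannot be an $\alpha$-balanced vertex separator of $G$, so $G \setminus T$ has a connected component $C$ with $|C| > \alpha n$. I would take the $\alpha n$ vertices promised by the lemma to lie inside $C$ (any subset of size $\lceil \alpha n \rceil$ of $C$ works), so the task reduces to showing $|L^*(v)| \geq b_\alpha$ for every $v \in C$.

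The key claim I would prove is $T \subseteq L^*(v)$ for each $v \in C$, which combined with $v \in L^*(v)$ immediately gives $|L^*(v)| \geq |T| + 1 = b_\alpha$. For a fixed $u \in T$, recall that $u \in L^*(v)$ iff $v$ and $u$ lie in the same connected component of $G$ after removing $\{w : \pi(w) > \pi(u)\}$, which is a proper subset of $T$. I would establish this by a downward induction on $\pi(u)$ through $T$: for the top-ranked vertex the claim is immediate from connectivity, and as $u$ descends through $T$ the ``removed set'' $\{w : \pi(w) > \pi(u)\}$ has size strictly less than $b_\alpha - 1$, so it is not even close to an $\alpha$-balanced separator, and the component containing $v$ (which already contains all of $C$) must reach $u$ via an edge from $C$ or via a vertex of $T$ that was already absorbed at an earlier step of the induction.

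The main obstacle is verifying this reachability step cleanly: one must rule out the possibility that some $u \in T$ remains ``stranded'' in a separate small component at every stage of the induction even though the overall separator is small. I expect the condition $\alpha \geq \sfrac{2}{3}$ to be used precisely here, through the bound $b_\alpha \leq (1-\alpha)n$, to guarantee that $C$'s union with the partial prefix of $T$ inserted so far always contains more than $\alpha n$ vertices and therefore must host $u$ as well (any competing ``small'' component could have size at most $(1-\alpha)n < \alpha n$ and hence cannot swallow $u$ instead). Should the monotone-reachability argument encounter cases where a specific $u$ genuinely fails to join $L^*(v)$, I would fall back on the slightly weaker statement that the failures can be absorbed into a set of size at most $(1-\alpha)n$ vertices of $C$, still leaving $|C| - (1-\alpha)n > (2\alpha-1)n \geq n/3$ good vertices, and combine with the top layers of $T$ themselves (whose labels are automatically small) to re-balance the count to exactly $\alpha n$.
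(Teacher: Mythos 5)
Your reduction to the canonical labeling and your characterization of $u \in L^*(v)$ (namely, that $v$ must reach $u$ in $G$ after deleting all vertices of rank greater than $\pi(u)$) are both fine, but the key claim that $T \subseteq L^*(v)$ for the top $b_\alpha - 1$ vertices $T$ is false, and the obstacle you flag is fatal rather than technical. Concretely: take a $3 \times m$ grid (so $b_{\sfrac{2}{3}} = 3$) and attach a pendant path $x_1, x_2$ to one grid vertex, with $\pi(x_1) = n$ and $\pi(x_2) = n-1$. Then $T = \{x_1, x_2\}$, the component $C$ of $G \setminus T$ is the whole grid, but for every $v \in C$ the only $v$-$x_2$-paths pass through $x_1$, which outranks $x_2$, so $x_2 \notin L^*(v)$. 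The flaw in the inductive step is that the non-separator property of the removed prefix only guarantees the \emph{existence} of a large component, not that $u$ itself lies in or adjacent to it; a high-ranked vertex can be permanently ``shielded'' by an even higher-ranked neighbor. The fallback you sketch does not repair this: the stranded vertices are elements of $T$, not of $C$, and in the example above all but one vertex of $T$ can be stranded, so no counting over $C$ recovers the bound $b_\alpha$ from $T$ alone.

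The paper's proof avoids this by never using the globally top-ranked vertices as the witness set. Instead it takes $k$ minimal so that deleting the ranks $n, \dots, k$ leaves a component $C$ with $|C| > \alpha n$, deletes additionally $\pi^{-1}(k-1)$ to shatter $C$ into pieces of size at most $\alpha n$, groups some of these pieces into a set $C^*$ with $\sfrac{1}{2}\alpha n < |C^*| \le \alpha n$, and defines $S$ as the set of deleted (high-ranked) vertices \emph{adjacent to} $C^*$. This $S$ is an $\alpha$-balanced separator (here is where $\alpha \ge \sfrac{2}{3}$ enters, to bound the complement of $C^*$ by $\alpha n$), so $|S| \ge b_\alpha$; and because each $w \in S$ has a neighbor in $C^* \subseteq C$ and $C$ is connected with all ranks below $\pi(w)$, every $v \in C$ reaches $w$ by a path inside $C \cup \{w\}$ on which $w$ is maximal, giving $S \subseteq L(v)$. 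The adjacency of the witness set to $C$ is exactly the ingredient your choice of $T$ lacks, and without it the membership claim cannot be established.
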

\begin{proof}
Consider some HCuHL $L$ of a graph $G=(V,E)$ and let $\pi$ be a vertex order respected by $L$.
Let $k$ be the smallest integer such that $G \setminus \{ \pi^{-1}(n), \dots, \pi^{-1}(k) \}$ contains a connected component $C$ of size $|C| > \alpha n$.
If we also remove the vertex $\pi^{-1}(k-1)$, $C$ is split into connected components $C_1, \dots, C_r$, which all have size at most $\alpha n$. As it holds that $\alpha n < |C| = 1 + \sum_{i=1}^r |C_i|$, we can choose a subset of these $r$ components such that their union yields a subgraph $C^*$ of size $\sfrac{1}{2} \cdot \alpha n < |C^*| \leq \alpha n$. This means that there are at most $(1 - \sfrac{\alpha}{2}) n \leq \alpha n$ many vertices that are not contained in $C^*$, as we have $\alpha \geq \sfrac{2}{3}$.

Consider now the vertices $S \subseteq \{\pi^{-1}(1), \dots, \pi^{-1}(k-1)\}$ that are adjacent to $C^*$.
Our previous observations imply that $S$ is an $\alpha$-balanced separator, and hence we have $|S| \geq b_\alpha$.
It remains to show that for every $v \in C$ we have $L(v) \subseteq S$.
Let $v \in C$ and $w \in S$. As $C$ is connected and $w$ is adjacent to $C$, there exists some $v$-$w$-path $P$ which is fully contained in the set $C \cup \{w\}$. As $\pi(w) \geq k-1$, and the vertices from $C$ have rank at most $k-1$, it follows that $w$ has maximum rank along $P$, so we have $w \in L(v)$, which concludes the proof.
\end{proof}

It follows immediately for $\alpha \geq \sfrac{2}{3}$ that the average label size of any HCuHL is $\Lavg \geq \alpha \cdot b_{\alpha}$ and moreover, one can show that $\Lavg \geq \sfrac{1}{12} \cdot b_{\sfrac{1}{2}}$.\footnote{Note that in \cite{Blum2020}, it was stated that the average search space size of any CCH is $\Savg \geq \sfrac{2}{9} \cdot b_{\sfrac{1}{2}}$. The presented proof claimed that after removing a minimum $\sfrac{2}{3}$-balanced separator, a connected component of size $n' \geq n / 3$ remains. We are however only guaranteed that $n' \geq \lfloor n / 3 \rfloor$, which yields $\Savg \geq \sfrac{1}{12} \cdot b_{\sfrac{1}{2}}$.}

Let us now consider general CuHL. We first show that we can also bound the average label size of any CuHL in terms of the size of a minimum balanced separator. However, as the proof for HCuHL relies heavily on the hierarchical structure, we need a different argument.

\begin{lemma}\label{lem:lower-bound-CuHL}
For any CuHL $L$ and $\alpha \geq \sfrac{1}{2}$ there are $\frac{(2 \alpha - 1)}{(2 \alpha)} \cdot n$ vertices $v$ such that $\card{L(v)} \geq b_{\alpha}$.
\end{lemma}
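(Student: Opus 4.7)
The plan is to argue by contradiction: suppose that fewer than $\frac{2\alpha-1}{2\alpha}\cdot n$ vertices satisfy $|L(v)|\ge b_\alpha$, so that the set of ``small-label'' vertices $T=\{v\in V : |L(v)| < b_\alpha\}$ has $|T| > \frac{n}{2\alpha}$, and derive a contradiction. For every $v\in T$, since $|L(v)|< b_\alpha$, the set $L(v)$ is not an $\alpha$-balanced separator, so $G-L(v)$ contains a connected component $R(v)$ with $|R(v)|>\alpha n$. Because $\alpha\ge \sfrac{1}{2}$, any two such components would together exceed $n$ vertices, so $R(v)$ is unique; moreover $v\notin R(v)$ since $v\in L(v)$.

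The central structural claim I would establish is: for every $v\in T$ and every $t\in R(v)$ we have $v\in L(t)$. When $v$ has a neighbor $v'\in R(v)$, this is immediate: for any $t\in R(v)$, concatenating the edge $vv'$ with a $v'$-$t$-path inside the connected set $R(v)$ produces a $v$-$t$-path whose interior lies in $R(v)$, hence is disjoint from $L(v)$. The customizable cover property forces some vertex of $L(v)\cap L(t)$ onto this path, and the only candidate is $v$ itself, giving $v\in L(t)$. When $v$ has no neighbor in $R(v)$, I would apply the same idea after passing to the component $C^{*}(v)$ of $v$ in $G-(L(v)\setminus\{v\})$: a short argument shows $C^{*}(v)\subseteq\{v\}\cup\{t : v\in L(t)\}$, and $L(v)\setminus\{v\}$ then plays the role of a separator between $C^{*}(v)$ and $R(v)$, allowing either a recursion on a smaller instance or a dedicated counting step for these ``insulated'' vertices. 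I expect this no-neighbor case to be the main technical obstacle, and it is presumably the source of the slightly weaker constant $\frac{2\alpha-1}{2\alpha}$ compared to the hierarchical bound $\alpha$ of \cref{lem:lower-bound-HCuHL}.

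Granting the claim, the bound follows by double counting the set of pairs $\{(v,t) : v\in T,\ t\in R(v)\}$. Its size is $\sum_{v\in T}|R(v)|>|T|\cdot\alpha n$. On the other hand, by the claim each such pair has $v\in L(t)$ and therefore contributes $1$ to $|L(t)|$, so the size is at most $\sum_{t}|L(t)|$. Splitting this sum according to whether $t\in T$ (using $|L(t)|<b_\alpha$) or $t\notin T$ (using the trivial $|L(t)|\le n$) gives $\sum_{t}|L(t)|<|T|\cdot b_\alpha+(n-|T|)\cdot n$. Combining these two estimates yields $|T|\bigl(n(\alpha+1)-b_\alpha\bigr)<n^{2}$, and inserting the elementary bound $b_\alpha\le(1-\alpha)n$ (which holds because any set of $(1-\alpha)n$ vertices is trivially an $\alpha$-balanced separator) rearranges to $|T|<\frac{n}{2\alpha}$, contradicting the assumption and completing the proof.
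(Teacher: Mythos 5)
Your counting framework at the end is internally consistent, but the proof stands or falls with the ``central structural claim'' ($v\in L(t)$ for all $t\in R(v)$), and that claim is false precisely in the no-neighbor case that you flag but do not resolve. Concretely, take a clique $K=\{c_1,\dots,c_k\}$, a vertex $u$ adjacent to all of $K$, and a vertex $v$ adjacent only to $u$, with the labeling $L(v)=\{v,u\}$, $L(u)=\{u\}$, $L(c_i)=\{u\}\cup\{c_j : j\geq i\}$. This is a valid CuHL: every path leaving $\{v\}$ or entering $K$ from outside passes through $u$, which lies in every label, and any path inside $K$ contains its highest-indexed vertex, which lies in the labels of both endpoints. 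Here $b_{\sfrac{1}{2}}\geq \sfrac{k}{2}-1$ (any balanced separator must delete about half the clique), so for $k\geq 8$ the vertex $v$ satisfies $\card{L(v)}=2<b_{\sfrac{1}{2}}$, and $R(v)=K$ is the unique large component of $G-L(v)$. Yet $v\notin L(c_i)$ for every $c_i\in R(v)$. So such ``insulated'' vertices can contribute nothing at all to the labels of $R(v)$, and the double count $\sum_{v\in T}\card{R(v)}\leq\sum_t\card{L(t)}$ breaks down; the gap is not a technicality that a ``short argument'' or a recursion will patch, but the actual content of the lemma.

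The paper's proof avoids this entirely by never trying to force a specific vertex into other labels. Instead, for each ordered pair it uses the set $L(a)\cap L(b)$, which the customizable cover property guarantees to be an $a$-$b$-separator, and orients the pair towards the endpoint whose component in $G-(L(a)\cap L(b))$ is smaller (``$b$ dwarfs $a$''). A counting argument over the $\binom{n}{2}$ pairs shows that at least $\frac{2\alpha-1}{2\alpha}n$ vertices $b$ dwarf at least $(1-\alpha)n$ vertices $a_1,\dots,a_k$; for such a $b$, either some single $L(a_i)\cap L(b)$ is already an $\alpha$-balanced separator, or the union $\bigcup_i\bigl(L(a_i)\cap L(b)\bigr)\subseteq L(b)$ is one, because every component of the complement either sits inside some small $C_i$ or misses all $k$ of the $a_i$ and hence has at most $n-k\leq\alpha n$ vertices. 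In the example above this correctly charges the separator to the clique vertices (whose labels are large) rather than to $v$. If you want to rescue your approach, you would need a replacement accounting for insulated vertices along these lines; as written, the argument does not prove the lemma.
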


\begin{proof}
Consider two vertices $a$ and $b$ of the given graph $G=(V,E)$. It holds that the two vertices are separated by $L(a) \cap L(b)$, i.e. either $G[V \setminus (L(a) \cap L(b))]$  has two distinct connected components $C_a$ and $C_b$ which contain $a$ and $b$, respectively, or $a$ or $b$ (or both) are contained in $L(a) \cap L(b)$. In the latter case, we choose $C_v = \emptyset$, if $v \in L(a) \cap L(b)$ for $v = a$ or $v=b$.
Based on the sizes of $C_a$ and $C_b$, we create a relation $\prec$: If $|C_a| < |C_b|$, let $a \prec b$, if $|C_b| < |C_a|$, let $b \prec a$, and if $|C_a| = |C_b|$ we arbitrarily choose $a \prec b$ or $b \prec a$. 
If $a \prec b$, we also say that $b$ \emph{dwarfs} $a$. 
Note that $a \prec b$ implies $|C_a| \leq \sfrac{n}{2}$.

We claim that the number of vertices $b$ which dwarf at least $(1-\alpha) n$ vertices each is $n' \geq \frac{(2 \alpha - 1)}{(2 \alpha)} \cdot n$.
To prove this, observe that there are ${n \choose 2} \leq \sfrac{1}{2} \cdot n^2$ pairs $(a,b)$ such that $a \prec b$.
If a vertex $b$ dwarfs less than $(1-\alpha) n$ vertices, it participates in less than $(1-\alpha) n$ such pairs $(a,b)$, otherwise it participates in at most $n$ such pairs.
This means that $n' \cdot n + (n-n')  (1-\alpha) n \geq \sfrac{1}{2} \cdot n^2$, which can be rearranged to $n' \geq \frac{(2 \alpha - 1)}{(2 \alpha)} \cdot n$. This proves our claim.

Consider now a vertex $b$ that dwarfs $k \geq (1-\alpha) n$ vertices $a_1, \dots, a_k$.
We show that the label of $b$ has size at least $\card{L(b)} \geq b_{\alpha}$.
This proves the lemma.
For $i = 1, \dots, k$ let $S_i = L(a_i) \cap L(b)$ and if $a_i \not \in S_i$, then denote the connected component of $G[V \setminus S_i]$ containing $a_i$ by $C_i$, otherwise let $C_i = \emptyset$.
Suppose first that there is some $i$ such that $\card{C_i} \geq (1-\alpha) n$.
This means that any other connected component $C$ of $G[V \setminus S_i]$ has size $\card{C} \leq \alpha \cdot n$.
Moreover, as $a_i \prec b$, it holds that $\card{C_i} \leq \sfrac{n}{2} \leq \alpha \cdot n$ for $\alpha \geq \sfrac{1}{2}$. This means that $S_i = L(a) \cap L(b)$ is an $\alpha$-balanced separator, and hence we have $|L(b)| \geq |S_i| \geq b_{\alpha}$.

Suppose now that for any $i$, the connected component $C_i$ of $G[V \setminus S_i]$ has size $\card{C_i} < (1-\alpha) n$. We show that $S = S_1 \cup \dots \cup S_k$ is an $\alpha$-balanced separator. To that end consider a connected component $C$ of the graph $G[V \setminus S]$. If $C$ contains the vertex $a_i$ for some $i$, then it holds that $C \subseteq C_i$, as any connected component of $G[V \setminus S]$ is completely contained in a unique connected component of $G[V \setminus S_i]$. This implies $\card{C} \leq \card{C_i} \leq \sfrac{n}{2} \leq \alpha n$ for $\alpha \geq \sfrac{1}{2}$.
Consider the case that $a_i \not \in C$ for all $i = 1, \dots, k$. This means that $C$ has size $\card{C} \leq n - k \leq n - (1-\alpha) n = \alpha n$. It follows that $S$ is an $\alpha$-balanced separator and as $S = S_1 \cup \dots \cup S_k = L(b) \cap (L(a_i) \cup \dots L(a_k)) \subseteq L(b)$ we obtain that $\card{L(b)} \geq \card{S} \geq b_{\alpha}$.
\end{proof}

This means that the average label size of any CuHL is $\Lavg \geq \frac{(2 \alpha - 1)}{(2 \alpha)} \cdot b_{\alpha}$ for $\alpha \geq \sfrac{1}{2}$.
For $\alpha = \sfrac{2}{3}$ we obtain that $\Lavg \geq \sfrac{1}{4} \cdot b_{\sfrac{2}{3}}$, which is not too far from the lower bound of $\Lavg \geq \sfrac{2}{3} \cdot b_{\sfrac{2}{3}}$ for hierarchical CuHL, which follows from \cref{lem:lower-bound-HCuHL}.

\begin{corollary}\label{cor:Lavg_two_thirds}
For every CuHL, the average label size is $\Lavg \geq \sfrac{1}{4} \cdot b_{\sfrac{2}{3}}$.
\end{corollary}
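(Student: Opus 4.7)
The plan is to instantiate \cref{lem:lower-bound-CuHL} at the specific value $\alpha = \sfrac{2}{3}$ and then convert the resulting per-vertex statement into a bound on the average. Since the lemma is already proved above, essentially no new work is required beyond a short calculation.

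First, I would compute the coefficient $\frac{2\alpha - 1}{2\alpha}$ for $\alpha = \sfrac{2}{3}$. Substituting gives $\frac{2 \cdot \sfrac{2}{3} - 1}{2 \cdot \sfrac{2}{3}} = \frac{\sfrac{1}{3}}{\sfrac{4}{3}} = \sfrac{1}{4}$. Thus \cref{lem:lower-bound-CuHL} guarantees, for any CuHL $L$, the existence of a set $V' \subseteq V$ with $|V'| \geq \sfrac{1}{4} \cdot n$ such that every vertex $v \in V'$ satisfies $|L(v)| \geq b_{\sfrac{2}{3}}$.

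Second, I would translate this into the claimed bound on $\Lavg$. Using $|L(v)| \geq 0$ for the remaining vertices, we obtain
\[
\sum_{v \in V} |L(v)| \;\geq\; \sum_{v \in V'} |L(v)| \;\geq\; \tfrac{1}{4} n \cdot b_{\sfrac{2}{3}},
\]
and dividing by $n$ yields $\Lavg \geq \sfrac{1}{4} \cdot b_{\sfrac{2}{3}}$, which is exactly the claim.

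There is no real obstacle here: the heavy lifting was done in \cref{lem:lower-bound-CuHL}, and the corollary is merely a clean specialization to the most natural choice of $\alpha$. The only thing to be careful about is verifying that $\alpha = \sfrac{2}{3}$ lies in the admissible range $[\sfrac{1}{2}, 1)$ required by the lemma, which it clearly does.
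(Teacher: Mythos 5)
Your proposal is correct and matches the paper's own derivation: the text preceding the corollary obtains $\Lavg \geq \frac{2\alpha-1}{2\alpha}\cdot b_\alpha$ directly from \cref{lem:lower-bound-CuHL} and then substitutes $\alpha = \sfrac{2}{3}$ to get the factor $\sfrac{1}{4}$, exactly as you do. The averaging step and the check that $\sfrac{2}{3} \in [\sfrac{1}{2},1)$ are both handled appropriately.
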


This result is interesting on its own, as it shows that CuHL is not expected to work well on graphs which do not exhibit small balanced separators. However, the obtained lower bounds on HCuHL and CuHL will also be an important ingredient in the design of suitable approximation algorithms, as discussed in more detail below.

\subsection{Approximation Algorithms}
Based on the lower bounds shown above, we can proceed similarly to \cite{Blum2020} to prove that a HCuHL based on a so-called nested dissection order yields an average label size that is at most a factor of $\bigO(\log n)$ larger than the average label size of any general CuHL. 

A nested dissection order of a graph $G$ is created as follows. We determine a minimum balanced separator $S$ of $G$, remove $S$ from $G$ and recursively process the remaining connected components. This yields a hierarchical decomposition of the graph $G$, the so-called $\alpha$-balanced separator decomposition:
Formally, an $\alpha$-balanced separator decomposition of $G=(V,E)$ is a tree $T$ whose vertices are disjoint subsets of $V$ and that is recursively defined as follows. If $n = 1$, then $T$ consists of a single node $V = S$, otherwise the root of $T$ is an $\alpha$-balanced separator $S$ of $G$ which splits $G$ into connected components $G_1, \dots, G_d$, and moreover, the children of $S$ are the roots of $\alpha$-balanced separator decompositions of $G_1, \dots, G_d$.
To obtain a nested dissection order $\pi_{\nd} \colon V \rightarrow \{1, \dots, n\}$ of $G$, we perform a post-order traversal of $T$, i.e.\ the vertices of the top-level separator are chosen as the top-most vertices, and choose an arbitrary order within every separator.

Denote the canonical HCuHL respecting the nested dissection order $\pi_{\nd}$ by $L^{\nd}$, and let $\Lavg^{\nd}$ be the average label size of $L^{\nd}$. Moreover, denote the minimum average label size of any CuHL of $G$ by $\Lavg^*$.
In the following we use the notation $\Lavg^{\nd}(G)$ and $\Lavg^*(G)$ to refer to the average label sizes of a (sub)graph $G$ when using a nested dissection and an optimal vertex order, respectively.
When we consider subgraphs $G_1, \dots, G_d$ of $G$, we denote the number of their vertices by $n_1, \dots, n_d$, respectively. The following lemma was shown in~\cite{Blum2020}.
\begin{lemma}\label{lem:LavgUB}[Lemma 7 in \cite{Blum2020}].
For $\alpha \geq \sfrac{1}{2}$ let $S$ be a minimum $\alpha$-balanced separator of a graph $G$. If the connected components of $G$ that remain after removing $S$ are $G_1, \dots, G_d$, then we have $\Lavg^{\nd}(G) \leq \sfrac{1}{n} \sum_{i=1}^d n_i \cdot \Lavg^{\nd}(G_i) + \card{S}$.
\end{lemma}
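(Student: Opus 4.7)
The plan is to analyze the structure of the canonical HCuHL under a nested dissection order and decompose each label into three disjoint parts according to where the hubs lie: in $S$, in the same component as the vertex, or in a different component. The key observation is that a nested dissection order places all vertices of $S$ strictly above every vertex of $G_1, \dots, G_d$, and the order restricted to any $G_i$ is itself a nested dissection order of $G_i$.

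First I would fix some $v \in G_i$ and classify a potential hub $u$ by its location. If $u \in G_j$ for $j \neq i$, then every $v$-$u$-path must pass through a vertex of $S$, and that vertex has higher rank than $u$, so $u$ cannot be the maximum-rank vertex on any such path; hence $u \notin L^{\nd}(v)$. If $u \in G_i$, then $u \in L^{\nd}(v)$ iff there is a $v$-$u$-path on which $u$ attains maximum rank. Such a path must stay inside $G_i$, because any excursion into another component or into $S$ introduces a higher-rank vertex. Therefore the restriction of $L^{\nd}(v)$ to $G_i$ coincides with the label of $v$ in the canonical HCuHL of $G_i$ under the induced order, which is a nested dissection order of $G_i$. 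If $u \in S$, then we trivially have $|L^{\nd}(v) \cap S| \le |S|$.

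Combining these observations gives, for each $v \in G_i$, the pointwise bound $|L^{\nd}(v)| \le |L^{\nd}_{G_i}(v)| + |S|$, where $L^{\nd}_{G_i}$ denotes the canonical HCuHL of $G_i$ under its inherited nested dissection order. For $v \in S$ I would use that $L^{\nd}(v) \subseteq S$, yielding $|L^{\nd}(v)| \le |S|$. Summing over all vertices of $G$, one obtains
\[
\sum_{v \in V} |L^{\nd}(v)| \;\le\; \sum_{i=1}^{d} \sum_{v \in G_i} |L^{\nd}_{G_i}(v)| \;+\; (n-|S|)\cdot|S| \;+\; |S|\cdot|S| \;=\; \sum_{i=1}^{d} n_i \cdot \Lavg^{\nd}(G_i) \;+\; n\cdot|S|.
\]
Dividing by $n$ yields the claimed inequality.

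The main obstacle is the middle case, where one must rule out that a within-component hub of $v$ in the full graph $G$ could be certified by a path leaving $G_i$. The argument works because, under a nested dissection order, every such detour visits a strictly higher-rank vertex of $S$, and thus cannot witness $u$ having maximum rank. Once this is established, the identification with the recursively defined HCuHL of $G_i$ is immediate, and the bookkeeping to assemble the final inequality is routine.
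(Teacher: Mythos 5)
Your proof is correct. Note that the paper itself does not prove this lemma---it imports it verbatim as Lemma~7 from \cite{Blum2020}---so there is no in-paper argument to compare against; your decomposition (hubs in $S$, hubs in the own component, no hubs in foreign components, plus $L^{\nd}(v)\subseteq S$ for $v\in S$) is the natural and, as far as the cited source goes, essentially the standard argument for this bound, and all steps check out: the witnessing path for a hub $u\in G_i$ of $v\in G_i$ cannot leave $G_i$ without passing through a higher-ranked separator vertex, the induced order on $G_i$ is again a nested dissection order, and the final bookkeeping $(n-\card{S})\cdot\card{S}+\card{S}^2=n\cdot\card{S}$ gives exactly the claimed inequality.
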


Moreover we can show the following lower bound on the minimum average label size.
\begin{lemma}\label{lem:LavgLB}
If $G_1, \dots, G_d$ are disjoint subgraphs of a graph $G$, then we have $\Lavg^*(G) \geq \sfrac{1}{n} \sum_{i=1}^d n_i \cdot \Lavg^*(G_i)$.
\end{lemma}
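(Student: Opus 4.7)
The plan is to take an optimal CuHL of $G$ and show that restricting its labels to each subgraph $G_i$ yields a valid CuHL of $G_i$. The sum of the resulting label sizes cannot exceed the sum of the original label sizes (since the $G_i$ are disjoint subgraphs), which will immediately give the desired inequality.

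More concretely, let $L$ be an optimal CuHL of $G$, so that $\sum_{v \in V} \card{L(v)} = n \cdot \Lavg^*(G)$. For each $i \in \{1, \dots, d\}$, define a labeling $L_i$ of $G_i$ by $L_i(v) = L(v) \cap V(G_i)$ for every $v \in V(G_i)$. I would first verify that $L_i$ is a valid CuHL of $G_i$: given any two vertices $s, t \in V(G_i)$ and any $s$-$t$-path $P$ in $G_i$, the path $P$ is also an $s$-$t$-path in $G$ (since $G_i$ is a subgraph of $G$), so the customizable cover property of $L$ in $G$ gives a vertex $w \in L(s) \cap L(t)$ lying on $P$. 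Since all vertices of $P$ are in $V(G_i)$, we have $w \in V(G_i)$, and therefore $w \in L_i(s) \cap L_i(t)$. Hence $L_i$ satisfies the customizable cover property on $G_i$.

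By optimality of $\Lavg^*(G_i)$, this gives $n_i \cdot \Lavg^*(G_i) \leq \sum_{v \in V(G_i)} \card{L_i(v)} \leq \sum_{v \in V(G_i)} \card{L(v)}$. Summing over $i = 1, \dots, d$ and using that the vertex sets $V(G_1), \dots, V(G_d)$ are pairwise disjoint subsets of $V$, we obtain
\[
\sum_{i=1}^d n_i \cdot \Lavg^*(G_i) \;\leq\; \sum_{i=1}^d \sum_{v \in V(G_i)} \card{L(v)} \;\leq\; \sum_{v \in V} \card{L(v)} \;=\; n \cdot \Lavg^*(G).
\]
Dividing by $n$ yields the claimed bound.

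There is no real obstacle here; the only point that requires a moment of care is the verification that $L_i$ is itself a CuHL, which hinges on the observation that a path entirely contained in the subgraph $G_i$ is also a path in $G$, so that the customizable cover property inherited from $L$ can be invoked without losing the common hub (which automatically lies in $V(G_i)$ since all path vertices do). Note also that the argument does not require the $G_i$ to be connected components or induced subgraphs — mere disjointness of vertex sets combined with the subgraph property is enough.
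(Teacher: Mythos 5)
Your proposal is correct and follows essentially the same route as the paper's proof: restrict an optimal CuHL of $G$ to each $V(G_i)$, observe that the restriction is a valid CuHL of $G_i$ because any path inside $G_i$ is a path in $G$ whose cover vertex necessarily lies in $V(G_i)$, and then sum over the disjoint vertex sets. The only difference is that you spell out the verification of the customizable cover property for $L_i$, which the paper merely asserts.
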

\begin{proof}
Let $L$ be a CuHL of $G$ which minimizes $\Lavg$. It holds that $\Lavg^*(G) = \sfrac{1}{n} \sum_{v \in V} \card{L(v)}$.
For every $G_i=(V_i,E_i)$, the labeling $L_i \colon V_i \rightarrow 2^{V_i}$ given by $L_i(v) = L(v) \cap V_i$ is a valid CuHL of $G_i$.
This means that $\sum_{v \in V_i} \card{L_i(v)} \geq n_i \cdot \Lavg^*(G_i)$.
We obtain $\Lavg^*(G) = \sfrac{1}{n} \sum_{v \in V} \card{L(v)} \geq \sfrac{1}{n} \sum_{i=1}^d \sum_{v \in V_i} \card{L(v)} \geq \sfrac{1}{n} \sum_{i=1}^d \sum_{v \in V_i} \card{L_i(v)} \geq \sfrac{1}{n} \sum_{i=1}^d n_i \cdot \Lavg^*(G_i)$.
\end{proof}

By combining \cref{lem:LavgUB,lem:LavgLB} with the lower bound of $\Lavg^* \geq \frac{(2 \alpha - 1)}{(2 \alpha)} \cdot b_{\alpha}$, which follows from \cref{lem:lower-bound-CuHL}, we  now prove that the canonical HCuHL $L^{\nd}$ respecting the nested dissection order $\pi_{\nd}$ approximates the minimum average label size by a logarithmic factor.

\begin{restatable}{theorem}{LavgND}\label{thm:Lavg_ND}
For  $\alpha \geq \sfrac{1}{2}$ and any graph $G$, the canonical HCuHL respecting the optimal nested dissection order $\pi_\nd$ has an average label size of $\Lavg^{\nd} \leq \left(1 + \frac{2 \alpha}{2 \alpha - 1} \log_{\sfrac{1}{\alpha}} n\right) \cdot \Lavg^{*}$.
\end{restatable}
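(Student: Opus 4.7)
The plan is to prove the bound by induction on $n$, using the $\alpha$-balanced separator decomposition underlying $\pi_{\nd}$. Set $c = \frac{2\alpha}{2\alpha-1}$ so that the target inequality reads $\Lavg^{\nd}(G) \leq \bigl(1 + c \log_{1/\alpha} n\bigr) \cdot \Lavg^*(G)$. The base case $n = 1$ is immediate, since both sides vanish (or hold trivially, depending on the convention chosen for singletons).

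For the inductive step, I would let $S$ be the minimum $\alpha$-balanced separator of $G$ used at the top of the nested dissection, and let $G_1, \dots, G_d$ be the connected components of $G \setminus S$, each of size $n_i \leq \alpha n$. Applying \cref{lem:LavgUB} gives
\[
\Lavg^{\nd}(G) \;\leq\; \frac{1}{n}\sum_{i=1}^{d} n_i \cdot \Lavg^{\nd}(G_i) \;+\; \card{S}.
\]
The inductive hypothesis, applied to each $G_i$ and together with monotonicity of the logarithm, yields
\[
\Lavg^{\nd}(G_i) \;\leq\; \bigl(1 + c \log_{1/\alpha} n_i\bigr)\Lavg^*(G_i) \;\leq\; \bigl(1 + c(\log_{1/\alpha} n - 1)\bigr)\Lavg^*(G_i),
\]
where the second inequality uses $n_i \leq \alpha n$, so $\log_{1/\alpha} n_i \leq \log_{1/\alpha} n - 1$. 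Summing and invoking \cref{lem:LavgLB} gives
\[
\frac{1}{n}\sum_{i=1}^{d} n_i \cdot \Lavg^{\nd}(G_i) \;\leq\; \bigl(1 + c\log_{1/\alpha} n - c\bigr) \cdot \Lavg^*(G).
\]

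It remains to absorb the $\card{S}$ term into the bound. Since $S$ is a minimum $\alpha$-balanced separator, $\card{S} = b_\alpha$, and \cref{lem:lower-bound-CuHL} provides precisely $\Lavg^*(G) \geq \frac{2\alpha-1}{2\alpha} \cdot b_\alpha = \frac{1}{c} \cdot b_\alpha$, which rearranges to $\card{S} \leq c \cdot \Lavg^*(G)$. Plugging this into the displayed upper bound,
\[
\Lavg^{\nd}(G) \;\leq\; \bigl(1 + c\log_{1/\alpha} n - c\bigr)\Lavg^*(G) + c \cdot \Lavg^*(G) \;=\; \bigl(1 + c\log_{1/\alpha} n\bigr)\Lavg^*(G),
\]
which closes the induction.

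The main obstacle is the bookkeeping of the logarithmic factor: I must ensure that the ``$-c$'' gained from the $n_i \leq \alpha n$ inequality exactly cancels the ``$+c$'' that the separator contributes via \cref{lem:lower-bound-CuHL}. This is the reason for choosing $c = \frac{2\alpha}{2\alpha-1}$, and it is also the reason that the lower bound $\Lavg^* \geq \frac{2\alpha-1}{2\alpha} b_\alpha$ from \cref{lem:lower-bound-CuHL} is indispensable: without it, the inductive step accumulates an additive $\card{S}$ per level, blowing up the approximation factor. Once this algebraic matching is in place, the rest of the argument is a standard divide-and-conquer recurrence supported directly by \cref{lem:LavgUB} and \cref{lem:LavgLB}.
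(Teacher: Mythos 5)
Your proof is correct and follows essentially the same route as the paper's: both arguments run the divide-and-conquer induction over the $\alpha$-balanced separator decomposition, bound the recursive term via \cref{lem:LavgUB} and \cref{lem:LavgLB}, and absorb the separator cost $\card{S}$ using the lower bound $\Lavg^* \geq \frac{2\alpha-1}{2\alpha} b_\alpha$ from \cref{lem:lower-bound-CuHL}. The only (cosmetic) difference is that you induct on $n$ and cancel one $\frac{2\alpha}{2\alpha-1}$ per level via $\log_{1/\alpha} n_i \leq \log_{1/\alpha} n - 1$, whereas the paper inducts on the height of the decomposition tree and bounds that height by $\log_{1/\alpha} n$.
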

\begin{proof}
Let $T$ be an $\alpha$-balanced separator decomposition which induces the nested dissection order $\pi_\nd$.
For every leaf $S$ of $T$ we have $\Lavg^{\nd}(S) = \Lavg^{*}(S)$ as $S$ contains only one vertex.
Consider now some non-leaf node $S$ of $T$. Let $H$ be the subgraph of $G$ induced by $S$ and its descendants in $T$ and denote the connected components that remain after removing $S$ from $H$ by $H_1, \dots, H_d$.
Denote the number of vertices of $H$ and  of $H_1, \dots, H_d$ by $n'$ and $n_1, \dots, n_d$, respectively. Moreover, assume that for the average label sizes of $H_1, \dots, H_d$ we have an approximation factor of $\gamma$, i.e.\  $\Lavg^{\nd}(H_i) \leq \gamma \cdot \Lavg^{*}(H_i)$. \Cref{lem:LavgUB} implies
\[
	\Lavg^{\nd}(H) \leq \sfrac{1}{n'} \sum_{i=1}^d n_i \cdot \Lavg^{\nd}(H_i) + \vert S \vert
	= \gamma \cdot \sfrac{1}{n'} \sum_{i=1}^d n_i \cdot \Lavg^{*}(H_i) + \vert S \vert
\]
Moreover, $S$ is a minimum $\alpha$-balanced separator of $H$, so it follows from \cref{lem:lower-bound-CuHL} that $\Lavg^{*}(H) \geq \frac{2\alpha - 1}{2\alpha} \cdot \card{S}$, which can be rearranged to $\card{S} \leq \frac{2\alpha}{2\alpha - 1} \cdot \Lavg^{*}(H)$. In combination with \cref{lem:LavgLB} we obtain
\[
	\gamma \cdot \sfrac{1}{n'} \sum_{i=1}^d n_i \cdot \Lavg^{*}(H_i) + \card{S}
	\leq \gamma \cdot \Lavg^{*}(H) + \frac{2\alpha}{2\alpha - 1} \cdot \Lavg^{*}(H) 
	\leq \left(\gamma + \frac{2 \alpha}{2 \alpha - 1}\right) \cdot \Lavg^{*}(H)
\]
As for every leaf $S$ we have $\Lavg^{\nd}(S) = \Lavg^{*}(S)$ and the height of the separator decomposition $T$ is at most $\log_{1/\alpha} n$, it follows by induction that $\Lavg^{\nd}(G) \leq \left(1 + \frac{2 \alpha}{2 \alpha - 1} \log_{1/\alpha} n\right) \cdot \Lavg^{*}(G)$.
\end{proof}
We remark that for $\alpha \geq \sfrac{2}{3}$, using \cref{lem:lower-bound-HCuHL} instead of \cref{lem:lower-bound-CuHL} yields a better approximation factor relative to the optimal HCuHL.
\begin{lemma}
For $\alpha \geq \sfrac{2}{3}$ the canonical HCuHL respecting $\pi_{\nd}$ has an average label size which is at most $1 + \sfrac{1}{\alpha} \log_{\sfrac{1}{\alpha}} n$ times larger than the average label size of any HCuHL.
\end{lemma}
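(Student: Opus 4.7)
The plan is to mimic the proof of \cref{thm:Lavg_ND} step by step, replacing the CuHL lower bound from \cref{lem:lower-bound-CuHL} by the stronger HCuHL lower bound from \cref{lem:lower-bound-HCuHL}. Since the upper-bound recurrence \cref{lem:LavgUB} describes the nested dissection labeling itself and does not depend on whether we are comparing against CuHL or HCuHL, it can be reused verbatim. The only genuinely new ingredient needed is an HCuHL version of \cref{lem:LavgLB}.

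First I would introduce $\Lavg^{*\mathrm{h}}(G)$ to denote the minimum average label size of any HCuHL of $G$ (analogous to the $\Lavg^*$ used in \cref{thm:Lavg_ND}). I would then observe that \cref{lem:LavgLB} carries over to the hierarchical setting: given a minimum HCuHL $L$ of $G$ respecting some order $\pi$, the restriction $L_i(v) = L(v) \cap V_i$ to any subgraph $G_i = (V_i, E_i)$ is itself an HCuHL of $G_i$ respecting the induced order $\pi|_{V_i}$, because intersecting labels with $V_i$ preserves the property that only vertices of equal or higher rank appear. Hence the same chain of inequalities as in the proof of \cref{lem:LavgLB} yields $\Lavg^{*\mathrm{h}}(G) \geq \sfrac{1}{n} \sum_{i=1}^d n_i \cdot \Lavg^{*\mathrm{h}}(G_i)$ for any disjoint subgraphs $G_1, \dots, G_d$.

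Next I would run the induction over the $\alpha$-balanced separator decomposition $T$ exactly as in the proof of \cref{thm:Lavg_ND}. For a leaf separator $S$, the single vertex gives $\Lavg^{\nd}(S) = \Lavg^{*\mathrm{h}}(S)$. For a non-leaf separator $S$ with induced subgraph $H$ (of $n'$ vertices) and children components $H_1, \dots, H_d$, applying \cref{lem:LavgUB} together with the inductive approximation factor $\gamma$ on the children yields
\[
	\Lavg^{\nd}(H) \leq \gamma \cdot \sfrac{1}{n'} \sum_{i=1}^d n_i \cdot \Lavg^{*\mathrm{h}}(H_i) + \card{S}.
\]
Since $S$ is a minimum $\alpha$-balanced separator of $H$, \cref{lem:lower-bound-HCuHL} gives $\Lavg^{*\mathrm{h}}(H) \geq \alpha \cdot \card{S}$, i.e.\ $\card{S} \leq \sfrac{1}{\alpha} \cdot \Lavg^{*\mathrm{h}}(H)$. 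Combining this with the hierarchical analogue of \cref{lem:LavgLB} yields $\Lavg^{\nd}(H) \leq (\gamma + \sfrac{1}{\alpha}) \cdot \Lavg^{*\mathrm{h}}(H)$, so the approximation factor grows by $\sfrac{1}{\alpha}$ per level. As the height of $T$ is at most $\log_{1/\alpha} n$, induction yields the claimed factor $1 + \sfrac{1}{\alpha} \log_{1/\alpha} n$.

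The argument is essentially mechanical once the preceding lemmas are in place; the only subtlety worth stressing is the transfer of \cref{lem:LavgLB} to HCuHL, which I expect to be the main (though quite minor) obstacle — it simply requires checking that restricting labels to a vertex subset preserves the hierarchical property.
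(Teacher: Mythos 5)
Your proof is correct and follows essentially the same route the paper intends: rerun the induction of \cref{thm:Lavg_ND} with \cref{lem:lower-bound-HCuHL} in place of \cref{lem:lower-bound-CuHL}, so that the per-level additive loss $\frac{2\alpha}{2\alpha-1}$ becomes $\sfrac{1}{\alpha}$. Your check that \cref{lem:LavgLB} transfers to the hierarchical setting (restricting labels to an induced subgraph preserves both the customizable cover property and the rank condition) is precisely the one detail the paper leaves implicit, and it is handled correctly.
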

Note that \cref{thm:Lavg_ND} does not immediately imply a polynomial time approximation algorithm for minimizing $\Lavg$ as computing $\alpha$-balanced separators of minimum size is NP-hard~\cite{Feige2006}. However, as it was observed in \cite{Blum2020}, we can use a so-called pseudo-approximation algorithm due to Leighton and Rao \cite{Leighton1988} to compute a $\sfrac{3}{4}$-balanced separator of size $\mathcal{O}(\log n) \cdot b_{\sfrac{2}{3}}$ in polynomial time. This pseudo-approximation algorithm can also be used to compute an $\alpha$-balanced separator decomposition, which induces a nested dissection order $\pi_{\mathrm{pnd}}$.
For the canonical HCuHL respecting $\pi_{\mathrm{pnd}}$ we can show that it approximates the minimum average label size $\Lavg^*$ by a factor of $\mathcal{O}(\log^2 n)$, as compared to the optimal nested dissection order we lose a factor of $\mathcal{O}(\log n)$ for every separator.
We obtain the following theorem:

\begin{theorem}\label{thm:Approx}
For any graph $G$ we can compute a HCuHL in polynomial time which has an average label size of $\mathcal{O}(\log^2 n) \cdot \Lavg^*$, where $\Lavg^*$ denotes the minimum average label size of any CuHL.
\end{theorem}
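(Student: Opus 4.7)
The plan is to emulate the proof of \cref{thm:Lavg_ND}, replacing the NP-hard computation of an optimal $\alpha$-balanced separator with the polynomial-time pseudo-approximation of Leighton and Rao \cite{Leighton1988}. That algorithm, given any graph on $n'$ vertices, produces in polynomial time a $\sfrac{3}{4}$-balanced separator of size at most $c \log n' \cdot b_{\sfrac{2}{3}}$ for an absolute constant $c$. Applying it recursively inside every remaining component produces a $\sfrac{3}{4}$-balanced separator decomposition $T_{\mathrm{pnd}}$ of $G$, and a post-order traversal of $T_{\mathrm{pnd}}$ yields the nested dissection order $\pi_{\mathrm{pnd}}$. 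Let $L^{\mathrm{pnd}}$ denote the canonical HCuHL respecting $\pi_{\mathrm{pnd}}$ and write $\Lavg^{\mathrm{pnd}}$ for its average label size; this is the labeling we claim achieves the stated approximation.

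Next, I would run the inductive argument of \cref{thm:Lavg_ND} with $\alpha = \sfrac{3}{4}$. Fix a node of $T_{\mathrm{pnd}}$ corresponding to a subgraph $H$ on $n'$ vertices, with its separator $S$ and resulting components $H_1, \dots, H_d$. \Cref{lem:LavgUB} still yields
\[
\Lavg^{\mathrm{pnd}}(H) \;\leq\; \frac{1}{n'} \sum_{i=1}^d n_i \cdot \Lavg^{\mathrm{pnd}}(H_i) \;+\; \card{S}.
\]
The only change compared to the exact nested dissection proof is the bound on $\card{S}$: instead of $\card{S} = b_{\sfrac{2}{3}}(H)$, we now only have $\card{S} \leq c \log n \cdot b_{\sfrac{2}{3}}(H)$. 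Applying \cref{cor:Lavg_two_thirds} inside $H$ gives $b_{\sfrac{2}{3}}(H) \leq 4 \cdot \Lavg^{*}(H)$, so $\card{S} \in \mathcal{O}(\log n) \cdot \Lavg^{*}(H)$.

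Combining this with \cref{lem:LavgLB}, the induction proceeds exactly as in \cref{thm:Lavg_ND} but with an additive term of $\mathcal{O}(\log n) \cdot \Lavg^{*}(H)$ at every internal node rather than a constant multiple. The depth of $T_{\mathrm{pnd}}$ is $\mathcal{O}(\log n)$ because its separators are $\sfrac{3}{4}$-balanced, so iterating the recurrence contributes $\mathcal{O}(\log n)$ levels, each of which loses another $\mathcal{O}(\log n)$ factor relative to $\Lavg^{*}$. The final bound is therefore $\Lavg^{\mathrm{pnd}} \in \mathcal{O}(\log^2 n) \cdot \Lavg^{*}$. Polynomial overall runtime is clear, as we invoke Leighton--Rao at most $n$ times.

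The main obstacle is bookkeeping: one must be careful that the $\mathcal{O}(\log n)$ slack introduced by the pseudo-approximation at each node is charged against $\Lavg^{*}(H)$ for the \emph{current} subgraph $H$, so that \cref{lem:LavgLB} still telescopes the individual bounds $\Lavg^{*}(H_i)$ up to a single global $\Lavg^{*}(G)$. Once this accounting is set up as in the proof of \cref{thm:Lavg_ND}, the two logarithmic factors (one from decomposition depth, one from the separator pseudo-approximation) multiply to give the claimed $\mathcal{O}(\log^2 n)$-approximation.
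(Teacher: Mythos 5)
Your proposal is correct and follows essentially the same route as the paper: replace the optimal separator at each node of the decomposition by the Leighton--Rao pseudo-approximate $\sfrac{3}{4}$-balanced separator of size $\mathcal{O}(\log n) \cdot b_{\sfrac{2}{3}}$, charge it against $\Lavg^{*}(H)$ via \cref{cor:Lavg_two_thirds}, and rerun the induction of \cref{thm:Lavg_ND}, losing one $\mathcal{O}(\log n)$ factor per separator and another from the $\mathcal{O}(\log n)$ depth of the decomposition. In fact, your writeup is more explicit than the paper's own one-paragraph justification of this theorem.
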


Note that better approximation ratios can be achieved, if we can find smaller balanced separators than the algorithm of Leighton and Rao.
For instance, for the class of $\sqrt n \times \sqrt n$ grid graphs, it follows from \cite{Bauer2016} that we can compute a nested dissection order in polynomial time, which yields  labels of size at most $3 \sqrt n$.
As every $\sfrac{2}{3}$-balanced separator of a $\sqrt{n} \times \sqrt{n}$ grid graph has size at least $\sqrt{\sfrac{2}{3} \cdot n}$ \cite{Lipton1979}, \cref{cor:Lavg_two_thirds} implies a lower bound of $\sfrac{1}{4} \cdot \sqrt{\sfrac{2}{3} \cdot n}$ on the average label size. 
We hence get a constant approximation factor:\footnote{In \cite{Blum2020}, it was stated, that nested dissection yields an approximation factor of $4.5$ for the average search space size of CCH on grid graphs. To prove this, it was however implicitly assumed that $b_{\sfrac{2}{3}} \geq \sqrt n$, which is too large. Using $b_{\sfrac{2}{3}} \geq \sqrt{\sfrac{2}{3} \cdot n}$ instead yields a lower bound of $3/ (\sfrac{2}{3} \cdot \sqrt{\sfrac{2}{3}}) \approx 5.5$.}
\begin{lemma}
For a $\sqrt{n} \times \sqrt{n}$ grid graph we can compute a HCuHL in polynomial time, whose average label size is at most $3 / (\sfrac{1}{4} \cdot \sqrt{\sfrac{2}{3}}) \approx 14.7$ times larger than the average label size of any CuHL, and at most  $3/(\sfrac{2}{3} \cdot \sqrt{\sfrac{2}{3}}) \approx 5.5$ larger than for any HCuHL.
\end{lemma}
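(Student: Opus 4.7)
The plan is to combine three ingredients that are already available in the excerpt: an upper bound on the label sizes produced by nested dissection on the grid, a matching lower bound on the size of any $\sfrac{2}{3}$-balanced separator of the grid, and the two lower bounds on $\Lavg$ established in \cref{cor:Lavg_two_thirds} and \cref{lem:lower-bound-HCuHL}. The ratio of the first quantity to each of the two lower bounds yields the stated constants.

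First I would invoke the construction from \cite{Bauer2016}, which produces (in polynomial time) a nested dissection order $\pi$ of a $\sqrt n \times \sqrt n$ grid such that the canonical HHL respecting $\pi$ has every label of size at most $3 \sqrt n$. By \cref{lem:HCuHL_equal_CCH}, the canonical HCuHL respecting $\pi$ coincides with the CCH search spaces for $\pi$, so the same bound $\Lmax \leq 3\sqrt n$ transfers to the canonical HCuHL, and in particular $\Lavg^{\nd} \leq 3\sqrt n$ on the grid.

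Next I would apply the planar separator result of \cite{Lipton1979}, which implies that every $\sfrac{2}{3}$-balanced separator of a $\sqrt n \times \sqrt n$ grid graph has size at least $\sqrt{\sfrac{2}{3} \cdot n}$, i.e.\ $b_{\sfrac{2}{3}} \geq \sqrt{\sfrac{2}{3} \cdot n}$. Plugging this into \cref{cor:Lavg_two_thirds} gives $\Lavg^{*}(G) \geq \sfrac{1}{4} \cdot \sqrt{\sfrac{2}{3} \cdot n}$ for any CuHL. Similarly, \cref{lem:lower-bound-HCuHL} with $\alpha = \sfrac{2}{3}$ yields $\Lavg \geq \sfrac{2}{3} \cdot b_{\sfrac{2}{3}} \geq \sfrac{2}{3} \cdot \sqrt{\sfrac{2}{3} \cdot n}$ for any HCuHL.

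Finally I would divide the upper bound $3 \sqrt n$ by each lower bound, noting that the $\sqrt n$-factors cancel, and obtain approximation ratios $3 / (\sfrac{1}{4} \cdot \sqrt{\sfrac{2}{3}}) \approx 14.7$ and $3 / (\sfrac{2}{3} \cdot \sqrt{\sfrac{2}{3}}) \approx 5.5$, respectively. There is essentially no obstacle here beyond citing the right results; the only subtlety worth flagging is making sure that the label-size bound from \cite{Bauer2016} is genuinely a statement about canonical labels under a nested dissection order (so that it applies to HCuHL via \cref{lem:HCuHL_equal_CCH}), rather than about a specific metric-dependent construction. Since the grid has a natural recursive separator of size $\sqrt n$ and the construction in \cite{Bauer2016} is metric-independent, this transfer is immediate.
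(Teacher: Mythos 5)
Your proposal is correct and follows essentially the same route as the paper: the $3\sqrt{n}$ label bound from the nested dissection order of \cite{Bauer2016}, the separator lower bound $b_{\sfrac{2}{3}} \geq \sqrt{\sfrac{2}{3}\cdot n}$ from \cite{Lipton1979}, and the two lower bounds on $\Lavg$ from \cref{cor:Lavg_two_thirds} and \cref{lem:lower-bound-HCuHL}, combined by taking ratios. The subtlety you flag (that the $3\sqrt{n}$ bound is metric-independent, being a bound on CCH search spaces under nested dissection) is resolved correctly and matches the paper's implicit reasoning.
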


\Cref{thm:Lavg_ND} shows that the gap between the average label size of hierarchical CuHL and general CuHL is at most $\mathcal{O}(\log n)$.
This distinguishes Customizable HL from traditional HL, where hierarchical labelings can be larger than general labelings by a factor of $\Omega(\sqrt n)$~\cite{Babenko2015}.

\begin{corollary}
For any graph $G$, the optimal average label size of hierarchical CuHL is at most $\mathcal{O}(\log n)$ larger than the average label size of any CuHL.
\end{corollary}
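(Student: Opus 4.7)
The plan is to observe that this corollary is an immediate consequence of \cref{thm:Lavg_ND}, so essentially all the work has already been done. Let $\Lavg^{*}(G)$ denote the minimum average label size over all CuHL of $G$ (as defined before \cref{lem:LavgUB}), and let $\Lavg^{\mathrm{HCuHL}}(G)$ denote the minimum average label size over all \emph{hierarchical} CuHL of $G$. The goal is to show $\Lavg^{\mathrm{HCuHL}}(G) \leq \mathcal{O}(\log n) \cdot \Lavg^{*}(G)$.

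First, I would fix any concrete value of $\alpha$ in the range $[\sfrac{1}{2}, 1)$ permitted by \cref{thm:Lavg_ND}; choosing $\alpha = \sfrac{2}{3}$ is convenient and gives a clean constant. For this $\alpha$, the canonical HCuHL $L^{\nd}$ respecting the optimal nested dissection order $\pi_{\nd}$ is itself a hierarchical CuHL of $G$, so by definition $\Lavg^{\mathrm{HCuHL}}(G) \leq \Lavg^{\nd}(G)$. Then \cref{thm:Lavg_ND} directly yields
\[
	\Lavg^{\mathrm{HCuHL}}(G) \;\leq\; \Lavg^{\nd}(G) \;\leq\; \left(1 + \frac{2\alpha}{2\alpha-1}\log_{\sfrac{1}{\alpha}} n\right) \cdot \Lavg^{*}(G) \;=\; \mathcal{O}(\log n) \cdot \Lavg^{*}(G),
\]
which is exactly the statement of the corollary.

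There is essentially no obstacle here, so the only care needed is bookkeeping: I should make clear that $\Lavg^{*}$ in \cref{thm:Lavg_ND} refers to the minimum over all (not necessarily hierarchical) CuHL, which is precisely the quantity we are comparing against, and that the HCuHL constructed from the nested dissection order witnesses an upper bound on the optimal \emph{hierarchical} CuHL. Beyond this, the corollary is purely a restatement of \cref{thm:Lavg_ND}, and the contrast with traditional HL---where the hierarchical-to-general gap can be $\Omega(\sqrt n)$ by \cite{Babenko2015}---should be highlighted as the real takeaway of the statement, rather than the proof itself.
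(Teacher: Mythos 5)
Your proposal is correct and matches the paper's reasoning exactly: the paper likewise presents this corollary as an immediate consequence of \cref{thm:Lavg_ND}, since the canonical HCuHL respecting $\pi_{\nd}$ is itself a hierarchical CuHL and therefore witnesses the $\mathcal{O}(\log n)$ bound against $\Lavg^{*}$. Fixing $\alpha=\sfrac{2}{3}$ and noting that $\Lavg^{*}$ is the minimum over all (not necessarily hierarchical) CuHL is precisely the bookkeeping the paper relies on.
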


\section{Customization Algorithms}\label{sec:customization}
We now describe how the customization step of CuHL can be performed. Given some CuHL $L$ of a graph $G=(V,E)$ and edge weights $\ell \colon E \rightarrow \mathbb{R}_0^+$, we need to compute distance labels $d_v[u]$ for all $u, v$ satisfying $u \in L(v)$, such that we can correctly answer shortest path queries in the resulting weighted graph.
We first explain how we can proceed for hierarchical labels, then we consider general CuHL.

\subsection{Hierarchical CuHL}
For a HCuHL $L$ of a graph $G=(V,E)$, the customization can be performed as follows.
Let $\pi$ be some order respected by $L$ and denote the associated CCH graph by $G^*=(V,E \cup E^+)$.
W.l.o.g.\ we may assume that $L$ is the canonical HCuHL respecting $\pi$.
\Cref{lem:HCuHL_equal_CCH} states that for all $v \in V$ we have $L(v) = \searchspace(v)$.

This means that we can compute the distance labels $d_v[u]$ as in the customization step of CCH.
To that end, let $\Nup(v) = \{u \mid \{v,u\} \in E \cup E^+ \text{ and } \pi(v) > \pi(u) \}$ and $\Ndown(v) = \{u \mid \{v,u\} \in E \cup E^+ \text{ and } \pi(v) < \pi(u) \}$.
For $u \in L(v)$ we store the distance from $v$ to $u$ in $d_v[u]$. Initially we choose $d_v[u] = \ell(v,u)$ if $\{v,u\} \in E$ and $d_v[u] = \infty$ otherwise.
Then we follow the CCH approach of Dibbelt et al.\ \cite{Dibbelt2016} to compute $d_v[u]$ for all $u \in \Nup(v)$, i.e., we first only label edges of the CCH graph.
To that end we iterate over all vertices $v \in V$ increasingly by $\pi(v)$. For each $v$, we iterate over all $u \in \Nup(v)$ and consider all ``lower triangles'' $(v,w,u)$, i.e., all $w \in \Ndown(v) \cap \Ndown(u)$.
For each such $(v,w,u)$ we check whether $d_v[w]+d_u[w] < d_v[u]$ and if this is the case, we decrease $d_v[u]$ to $d_v[w]+d_u[w]$.
The running time of this step is linear in the number of lower triangles.

It remains to compute $d_v[u]$ for all $u \in L(v) \setminus \Nup(v)$, which can be done with the algorithm of Dijkstra. To reduce the runtime, it suffices to consider only ``upwards'' edges, i.e., we visit a neighbor $y$ of a vertex $x$ only if $y \in \Nup(x)$.
This yields a distance label $d_v[u]$ for all $u \in L(v)$, although in general we do not have $d_v[u] = \dist_G(v,u)$. %
Still, the correctness of the query algorithm is not affected, as we are guaranteed that for any query pair $(s,t)$ there is some vertex $p$ on the shortest $s$-$t$-path that can be reached from both $s$ and $t$ on a path in $G^*$ that is increasing w.r.t.\ $\pi$.

The running time of this approach is $\mathcal{O}\left(\sum_{v \in V} \card{L(v)} \log \card{L(v)} + \right. \allowbreak \left. \sum_{w \in L(v)} \card{\Nup(w)}\right)$, as a single Dijkstra run from some vertex $v$ visits $\card{L(v)}$ vertices and $\sum_{w \in L(v)} \card{\Nup(w)}$ edges.
Note also that for an efficient access to $\Nup(v)$ and $\Ndown(v)$, we have to store these sets explicitly in addition to $L(v)$.
As we have $\Nup(v) \subseteq L(v)$ though, we can just add a flag to all $w \in L(v)$ that satisfy $w \in \Nup(v)$.
Moreover, as we have $w \in \Ndown(v)$ if and only if $v \in \Ndown(w)$, additionally storing $\Ndown(v)$ only increases the space consumption by a factor of two.

Alternatively, we can proceed as follows.
Suppose that $u \in L(v) \setminus \Nup(v)$. This means that $v$ has some upper neighbor $w \in \Nup(v)$ such that $\dist(v,u) = \dist(v,w) + \dist(w,u)$. So, provided that $d_w[u]$ has already been computed, we can choose $d_v[u]$ on basis of $d_v[w]$ and $d_w[u]$.
We exploit this fact by iterating over all vertices $u \in V$ decreasingly by $\pi(u)$.
For each $v \in L(v)$, we iterate over all $w \in \Nup(v)$ and choose $d_v[u] \gets \min\{d_v[u], d_v[w]+d_w[u]\}$.
Eventually, $d_v[u]$ has been set for all $v \in V$ and all $u \in L(v)$.
As in the previously described method, this top-down approach does necessarily yield entries that satisfy $d_v[u] = \dist_G(v,u)$, but we are still guaranteed to obtain correct queries.

The running time of the first approach can be bounded by $\bigO\left(\sum_{v \in V} \card{L(v)} \log \card{L(v)} + \right. \allowbreak \left. \sum_{w \in L(v)} \card{\Nup(w)}\right)$, as a single Dijkstra run from some vertex $v$ visits $\card{L(v)}$ vertices and $\sum_{w \in L(v)} \card{\Nup(w)}$ edges.

The running time of the second approach is bounded by $\bigO\left( \sum_{v \in V} \card{L(v)} \cdot \card{\Nup(v)} \right)$.
For vertices $v$ where $\card{\Nup(v)}$ is large, the Dijkstra-based approach might be faster, while we expect the second approach to be faster for vertices further down in the vertex hierarchy.
We remark that it is also possible to combine both approaches, i.e., to use the algorithm of Dijkstra for the vertices of high $\pi$ and the top-down approach for the remaining vertices.
We propose to evaluate different customization strategies practically in future research.

Note also that the described approaches suggest to store the sets $\Nup(v)$ and $\Ndown(v)$ in addition to $L(v)$.
As we have $\Nup(v) \subseteq L(v)$ though, we can just add a flag to all $w \in L(v)$ that satisfy $w \in \Nup(v)$.
Moreover, as we have $w \in \Ndown(v)$ if and only if $v \in \Ndown(w)$, additionally storing $\Ndown(v)$ only increases the space consumption by a factor of two.

\subsection{General CuHL}
Consider a CuHL $L$ of a graph $G=(V,E)$ and edge weights $\ell \colon E \rightarrow \mathbb{R}_0^+$.
Let $u \in L(v)$. For any shortest path that consists of a single edge $\{u,v\}$, we have $\dist(u,v) = \ell(u,v)$.
Therefore, we initialize the distance label $d_v[u]$ as $\ell(u,v)$ if $\{u,v\} \in E$ and as $\infty$ otherwise.
Consider now some shortest $v$-$u$-path $P$ which consists of at least two edges.
This means that we can split $P$ into an edge $v$-$w$ and a non-empty shortest $w$-$u$-path.
It holds that $\dist(v,w) = \ell(v,w) + \dist(w,u)$.
The idea is that whenever the customization affects the answer to a $w$-$u$-query, we check whether an update of $d_v[u]$ is necessary.
By the customizable cover property we know that any $w$-$u$-path contains a vertex $w'$ such that $w' \in L(w) \cap L(u)$.
We distinguish the cases that \inlineItem{(a)} $w' = u$, \inlineItem{(b)} $w' = w$, and \inlineItem{(c)} $w' \not \in \{w,u\}$ (cf.\ \cref{fig:label_structure}).
It holds that the  distance between $w$ and $u$ can only change if $d_w[w']$ or $d_u[w']$ decreases.

\begin{figure}
\begin{subfigure}[t]{.3\textwidth}
\centering
\includegraphics[page=1]{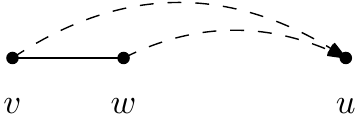}
\caption{$u \in L(w)$}
\end{subfigure}
\hfill
\begin{subfigure}[t]{.3\textwidth}
\centering
\includegraphics[page=2]{customization_label_structure}
\caption{$w \in L(u)$}
\end{subfigure} 
\hfill
\begin{subfigure}[t]{.3\textwidth}
\centering
\includegraphics[page=3]{customization_label_structure}
\caption{$w' \in L(w) \cap L(u)$}
\end{subfigure} 
\caption{The three possibilities when $u \in L(v)$ and the shortest $v$-$u$-path contains at least two edges. Solid lines denote edges in $G$, a dashed edge from $x$ to $y$ indicates that $y \in L(x)$.}\label{fig:label_structure}
\end{figure}

Let us now change the point of view and suppose that for $y \in L(x)$ the distance label $d_x[y]$ is updated. Based on the previous observations we identify all $u,v$ such that $d_v[u]$ needs to be updated.
It might be that case \inlineItem{(a)} applies, i.e., $x$ and $y$ take the roles of $w$ and $u$, respectively.
In this case we have $d_v[y] = \ell(v,x) + d_x[y]$.
This means that we have to iterate over all $v \in N(x)$ and perform the update $d_v[y] \gets \min \{ d_v[y], \ell(v,x) + d_x[y] \}$.
Similar, for case \inlineItem{(b)}, we have to iterate over all $v \in N(y)$ and perform the update $d_v[x] \gets \min \{ d_v[x], \ell(v,y) + d_x[y] \}$.
In case \inlineItem{(c)}, we have $y=w'$ whereas $x$ can take the role of $w$ or $u$.
For the former case we have to iterate over all $v \in N(x)$ and all vertices $u$ in the inverse label $L\inv(y) = \{ z \mid y \in L(z) \}$ of $y$ and perform the update $d_v[u] \gets \{ d_v[u], \ell(v,x) + d_x[y] + d_u[y] \}$.
For the latter case we have to iterate over all $w \in L\inv(y)$ and all $v \in N(w)$ and perform the update $d_v[u] \gets \{ \min d_v[u], d_x[y] + d_w[y] + \ell(v,w) \}$.

To keep track of all the labels that were modified and might trigger further changes, we use a queue $Q$, which initially contains all pairs $(x,y)$ such that $\{x,y\} \in E$ and $y \in L(x)$.
As long as $Q$ is not empty, we retrieve the first element and identify all $(u,v)$ such that $d_u[v]$ needs to be updated. All such pairs $(u,v)$ are added to $Q$, if they are not contained in $Q$ yet.
Whenever the queue $Q$ is empty, the customization is finished.

To analyze the runtime of the customization procedure, we observe that whenever some pair $(x,y)$ is retrieved from the queue $Q$ for the $k$-th time, then $d_x[y]$ is the length of the shortest $x$-$y$-path that contains at most $k$ edges. If $D_{\mathrm{hop}}$ denotes the maximum number of edges of all shortest paths, then it follows that every pair $(x,y)$ is retrieved at most $D_{\mathrm{hop}} +1$ times from $Q$.
The time required to handle a single pair $(x,y)$ can be determined as follows. The updates for cases \inlineItem{(a)} and \inlineItem{(b)} take time $\bigO(\deg(x))$ and $\bigO(\deg(y))$, respectively. The updates for case \inlineItem{(c)} take time $\bigO(\deg(x) \cdot \card{L\inv(y)} + \deg(w) \cdot \card{L\inv(y)})$.
This means that the total runtime is $\bigO(\sum_{v} \card{L(v)} \cdot D_{\mathrm{hop}} \cdot \Delta \cdot \max_{w} \card{L\inv(w)})$.

For this customization strategy, we need to store the sets $\Nup(v), \Ndown(v), L\inv(v)$ in addition to $L(v)$.
However, as we already mentioned in the previous section, we have $\Nup(v) \subseteq L(v)$, and hence, we can store the set $\Nup(v)$ by adding a flag to all $w \in L(v)$ that are contained in $\Nup(v)$.
Moreover, as $w \in \Ndown(v)$ iff $v \in \Nup(w)$ and $w \in L\inv(v)$ iff $v \in L(w)$, additionally storing $\Ndown(v)$ and $L\inv(v)$ only increases the space consumption by a factor of two.

\section{Conclusions and Future Work}
We introduced the concept of Customizable Hub Labeling and studied its theoretical properties. \Cref{fig:overview} illustrates  known results about approximability and relationships of hierarchical and general (Customizable) Hub Labeling and (Customizable) Contraction Hierarchies. There are interesting asymmetries between the traditional and customizable relationships. While we proved that the gap between optimal average CH search spaces and average HHL label sizes can be as large as $\Omega(\sqrt n)$, we also proved that for CCH and HCuHL the respective sizes always coincide.
Still, as the query time of HCuHL is linear in the label size, while for CCH, it can be be quadratric in the search space size, HCuHL is expected to outperform CCH in practice. Furthermore, the gap between HCuHL and CuHL is at most logarithmic, but there are instances with a gap of $\Omega(\sqrt n)$ between HHL and HL. One open question is whether the gap between HCuHL and CuHL can be tightened even further. Also, the known approximation algorithms are far better for HCuHL than for HHL. It is unclear whether the approximation factor for HHL can be improved or whether an inapproximability result could manifest this difference. Finally, it would be interesting to investigate whether the proposed approximation and customization algorithms for (H)CuHL are useful for practical application. While in the non-customizable setting,  HHL are used instead of (potentially far better) general HL mainly  due to practicability and the findings of empirical studies, our  results justify to focus on HCuHL in the customizable setting also from a theoretical perspective.
\begin{figure}[h]
\includegraphics[width=\textwidth]{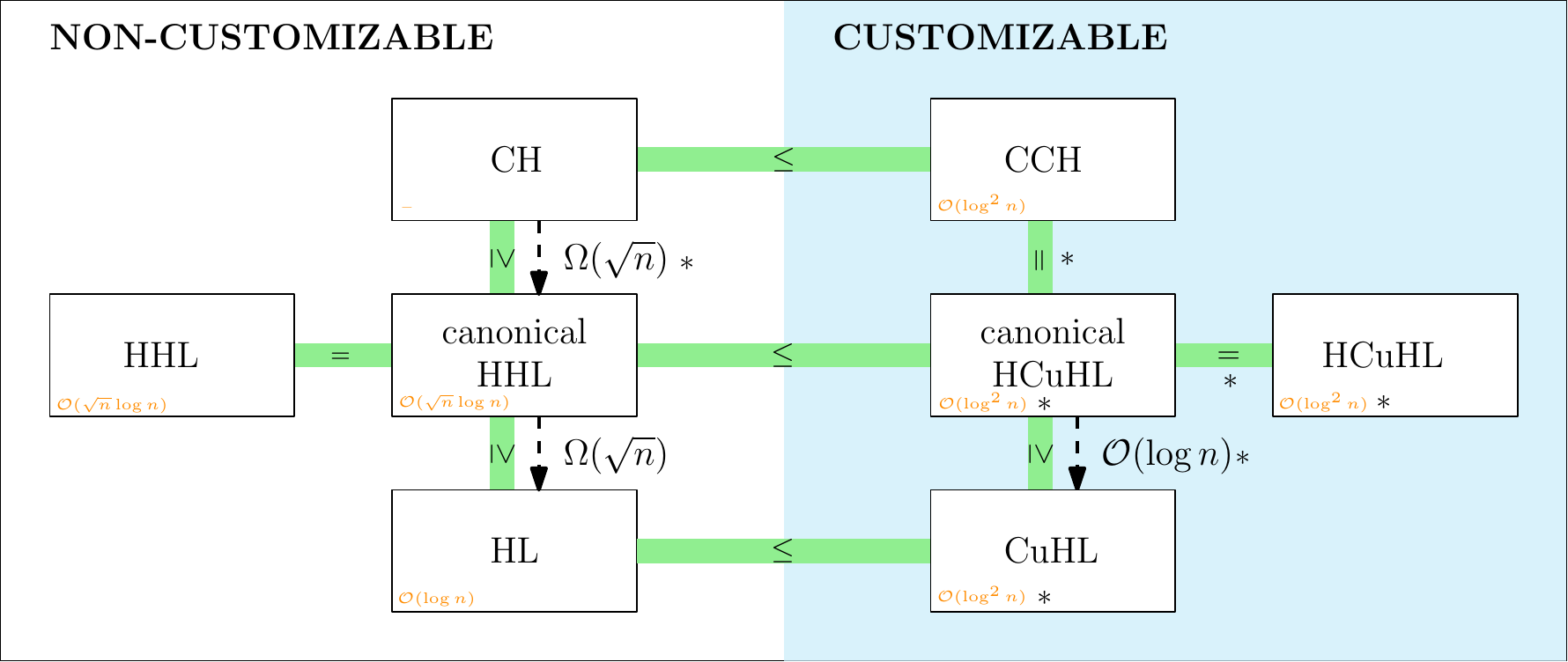}
\caption{Overview of the relationships between optimal average search space sizes and optimal average label sizes for conventional and customizable Contraction Hierarchies and (Hierarchical) Hub Labeling. A dashed arrow indicates that gap between the optimal sizes of the linked techniques is within the given upper/lower bound. Known approximation guarantees are shown via the orange labels in the lower left corner. New results from this paper are marked with a star $*$.}\label{fig:overview} 
\end{figure}
\bibliographystyle{plainurl}%
\bibliography{cuhl.bib}

\end{document}